\documentclass{article}

\usepackage{arxiv}

\usepackage[utf8]{inputenc} 
\usepackage[T1]{fontenc}    
\usepackage{hyperref}       
\usepackage{url}            
\usepackage{booktabs}       
\usepackage{amsfonts}       
\usepackage{nicefrac}       
\usepackage{microtype}      
\usepackage{lipsum}
\usepackage{graphicx}
\graphicspath{ {./images/} }

\usepackage{amsmath}
\usepackage{amssymb}
\usepackage{mathtools}
\usepackage{amsthm}
\usepackage{commath}

\usepackage{subcaption}
\usepackage{paralist}
\usepackage{tabularx}
\usepackage{marvosym}
\usepackage{xcolor}
\usepackage{ulem}
\usepackage{svg}
\usepackage{tikz}
\usepackage{cite}

\usepackage{pgfplots}
\pgfplotsset{compat=1.18}

\usepackage{siunitx}
\sisetup{output-exponent-marker=\ensuremath{\mathrm{e}}}

\usepackage{hyperref}
\usepackage{cleveref}

\crefformat{footnote}{#2\footnotemark[#1]#3}



\theoremstyle{plain}
\newtheorem{theorem}{Theorem}[section]

\newtheorem{lemma}[theorem]{Lemma}

\theoremstyle{definition}
\newtheorem{definition}[theorem]{Definition}
\newtheorem{assumption}[theorem]{Assumption}
\theoremstyle{remark}

\newcommand{\bigO}{\mathcal{O}}

\newcommand{\Var}{\mathrm{Var}}
\newcommand\given[1][]{\:#1\vert\:}

\title{Boltzmann Price: Toward Understanding the Fair Price in High-Frequency Markets}

\author{
 Przemys\l aw Rola \\
  Cracow University of Economics\\
  Department of Mathematics\\
  30-427 Krakow, ul. Rakowicka 27 \\
  \texttt{przemyslaw.rola@outlook.com} \\
}

\begin{document}
\maketitle
\begin{abstract}
In this paper, we introduce a parametrized family of prices derived from the Maximum Entropy Principle. The price is obtained from the distribution that minimizes bias, given the bid and ask volume imbalance at the top of the order book. Under specific parameter choices, it closely approximates the mid-price or the weighted mid-price. Using probabilities of bid and ask states, we propose a model of price dynamics in which both drift and volatility are driven by volume imbalance. Compared to standard models like Bachelier or Geometric Brownian Motion with constant volatility, our model can generate higher kurtosis and heavy-tailed distributions. Additionally, the drift term naturally emerges as a consequence of the order book imbalance. We validate the model through simulation and demonstrate its fit to historical equity data. The model provides a theoretical framework, integrating price, volume imbalance, and spread.
\end{abstract}

\keywords{Market Microstructure \and Maximum Entropy Principle \and Boltzmann distribution \and High-Frequency Trading \and Fundamental Price \and Micro-Price  \and Fair Price \and Efficient Price  \and Limit Order Book \and Market Impact \and Bid-Ask Spread \and Volume Imbalance \and Heavy-Tailed Distribution \and Bachelier Model \and Geometric Brownian Motion}

\section{Introduction}
\label{sec:intro}

High-frequency trading (HFT) has transformed modern financial markets, leveraging advanced computational techniques and low-latency infrastructure to execute trades at unprecedented speeds. In electronic markets, trading occurs through the interaction of limit order books (LOBs), where participants post limit or market orders to provide or consume liquidity. As presented in Figure~\ref{fig:lob_example}, the LOB represents a dynamic snapshot of supply and demand, with the bid-ask spread serving as a key indicator of market liquidity. The spread represents the cost of an instantaneous round-trip transaction of one share, i.e., buying and immediately selling one share, or vice versa.

For a trade to take place, an agent must consume liquidity by placing a market order to buy or sell a specific number of shares, executing at the best available price if the order book's volume at that price can fulfill the order. If insufficient volume exists, the price moves up (or down) through the order book's price levels until the order is completely filled.


As observed, for example, in \cite{Bonart_2018}, price changes are mostly driven by liquidity providers who agree on a hidden fundamental price. In this study, we introduce the fundamental price, as a consequence of the Maximum Entropy Principle applied to the two-state system representing the bid and ask quotes, where the probability of each state depends on volume imbalance. It can be interpreted that market participants aim to minimize bias arising from incomplete information about volume imbalances. According to the Maximum Entropy Principle, when information about a probability distribution is limited, the least biased (or most unbiased) distribution should be selected to avoid introducing unjustified assumptions.

In many market microstructure studies, there exists an unobservable reference price, often called the fundamental price, efficient price or fair price.\cite{Jaisson_2015} In practice, two main candidates for this price are the mid-price and the weighted mid-price. However, the mid-price is a low frequency signal, as it does not account for bid and ask volumes and exhibits high autocorrelation. Conversely, the weighted mid-price is noisy, fluctuating every update of the top of the book (TOB), and lacks a theoretical justification as a fair price estimator. In this study, we aim to address some of these limitations by proposing a measure that is less noisy than the weighted mid-price, incorporates volume imbalance, and provides a theoretical justification.

\begin{figure}[ht]
    \centering
    \begin{tikzpicture}
    \begin{axis}[
        xlabel={Price},
        ylabel={Volume},
        legend pos=north west,
        legend style={
        draw=black, 
        fill=white, 
        fill opacity=0.8, 
        text opacity=1, 
        font=\small\sffamily 
    },
        ybar,
        bar width=0.5cm,
        symbolic x coords={A, B, $P^b$, $P^a$, E, F},
        xtick={$P^b$, $P^a$},
        ytick={150, 200},
        yticklabels={$Q^a$, $Q^b$},
        enlarge x limits=0.3,
        tick style={draw=none}
    ]

    \addplot[blue, fill=blue!50] coordinates {
        (A, 100)
        (B, 150)
        ($P^b$, 200)
    };
    \addlegendentry{Bid}

    \addplot[red, fill=red!50] coordinates {
        ($P^a$, 150)
        (E, 175)
        (F, 250)
    };
    \addlegendentry{Ask}

    \draw[dashed] (axis description cs:0,0.64) -- (axis cs:$P^b$,200); 
    \draw[dashed] (axis description cs:0,0.36) -- (axis cs:$P^a$,150); 


    \end{axis}
    
    \end{tikzpicture}
    \caption{Symbolic representation of a Limit Order Book.}
    \label{fig:lob_example}
\end{figure}
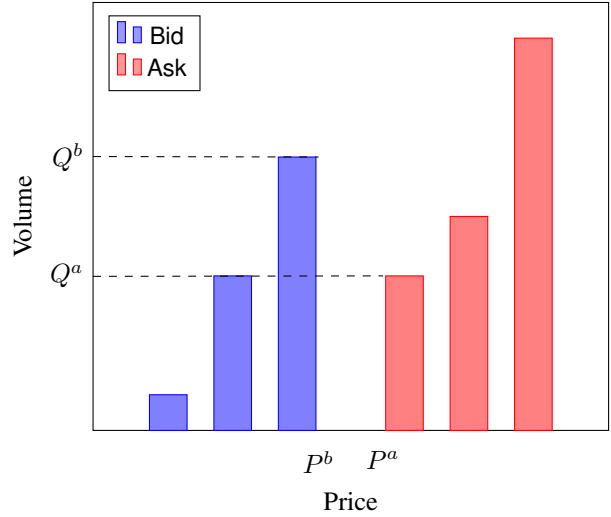


The (quoted) spread is defined as:
\begin{equation}
   \textit{spread} = S = P^a - P^b.
\end{equation}

Furthermore, given the best bid size $Q^b$ and the best ask size $Q^a$, the bid and ask (TOB) volume imbalances are defined as:\footnote{Usually $q$ is called in literature volume or order book imbalance. We use names bid and ask imbalance for clarity. Many authors, for example, \cite{Stoikov_2018, HAGSTROMER2021} denote it $I$ instead of $q$ in this paper.}
\begin{equation}
    q^b = q = \frac{Q^b}{Q^b + Q^a}, \qquad \qquad q^a = 1 - q = \frac{Q^a}{Q^b + Q^a}.
\end{equation}

where $q$ denotes the bid imbalance, and $1-q$ represents the ask imbalance.

We will use the (centered) imbalance given by:\footnote{Throughout this paper, `imbalance' is defined as $q$. When referring to centered imbalance, we will explicitly denote it as $\theta$.}
\begin{equation}
    \theta = q^b - \frac{1}{2} = \frac{Q^b - Q^a}{2\left( Q^b + Q^a \right)}.
\end{equation}

Sometimes in the literature \cite{Lipton2013}, the book imbalance is referred to as
\begin{equation}
    \mathcal{I} = 2\theta = \frac{Q^b - Q^a}{Q^b + Q^a}.
\end{equation}

One of the most widely used measures of fair price in market microstructure is the mid-price, defined as:
\begin{equation}
    P^{\text{mid}} = \frac{1}{2} \left( P^b + P^a \right),
\end{equation}

where $P^b$ represents the best bid price and $P^a$ denotes the best ask price. As mentioned, the mid-price is a low-frequency signal that does not account for volume imbalances in the limit order book.

To address this limitation, an alternative measure, the weighted mid-price, is often employed. This measure incorporates top of the book (TOB) volume imbalances to provide a more accurate estimate of fair price, typically expressed as
\begin{equation}
    P^{w} = \frac{Q^b P^a + Q^a P^b}{Q^b + Q^a} = q^b P^a + q^a P^b.
\end{equation}

Recently, some researchers have proposed alternatives to existing microstructure prices, including \cite{Bonart_2018, Jaisson_2015, Lehalle_2017}. Of particular interest in this study is the micro-price proposed by Stoikov\cite{Stoikov_2018}. This price can be described as a mid-price adjusted to account for both the bid-ask spread and volume imbalance.

As discussed, for example, in \cite{Lipton2013,HAGSTROMER2021} the dynamics of the TOB reflects the general intention of the market. As such, the order book imbalance can be interpreted as a valuable signal for the next price movement and could be picked up by traders.\footnote{In practice it could be used by the appropriate algorithmic trading strategy.} It is then intuitive what is stated in \cite{Lipton2013}: "When the number of shares available at the bid exceeds those at the ask, participants expect the next price movement to be upwards, and inversely, for the ask."\footnote{To what extent it is a valuable signal should depend on the book imbalance. The higher the imbalance, the more valuable the signal should be.}

\subsection{Contributions}

In this study, we introduce a novel price---the Boltzmann price---derived using the Maximum Entropy Principle. We analyze its properties and demonstrate its relationship to existing benchmarks, such as the mid-price and the weighted mid-price.

We further derive the dynamics of a price process, in which both drift and volatility are influenced by volume imbalance. Our results show that this model naturally generates higher kurtosis and heavier tails compared to, for example, the Bachelier \cite{Bachelier_1900} or Geometric Brownian Motion \cite{Black_1973} models.

Using historical market data, we show that the proposed model fits the distribution of price changes well. It provides a theoretical framework that integrates price dynamics, volume imbalance, and spread behavior.

\section{Related Work}

High-frequency trading (HFT) and market microstructure are crucial areas of financial research, with some attention focused on estimating fair asset prices using limit order book (LOB) data. \cite{Stoikov_2018} introduced the micro-price, a high-frequency estimator of future prices that incorporates the bid-ask spread
and the top of the book (TOB) imbalance. Constructed as a martingale, the micro-price provides a robust alternative to traditional metrics, such as the mid-price or volume-weighted mid-price, offering potentially greater predictive accuracy for HFT strategies. Complementing this, \cite{HAGSTROMER2021} examined biases in
effective bid-ask spread estimation, finding that spreads measured relative to the spread midpoint overstate the true effective spread by 13\% to 18\% for S\&P~500 stocks. This suggests that weighted mid-prices or micro-prices may serve as more reliable proxies for liquidity assessment than the traditional midpoint.
Consequently, accurate definition and estimation of fair prices are critical for various applications in market microstructure and trading.

Some researchers have proposed alternative definitions of a microstructure price that is directly
related to the bid and ask quotes. There is a great deal of literature devoted to the fundamental price \cite{Madhavan_1997, Bonart_2018}, efficient price \cite{Robert_2011, Delattre_2013} and fair price \cite{Gueant_2013b, Jaisson_2015, Lehalle_2017, Stoikov_2018}, with some alternatives to the mid-price and the weighted-mid price proposed.

Shannon entropy has been applied in several studies, for example, to test the Efficient Market Hypothesis \cite{Zhang_1999, Shternshis_2022} and to model the information available to informed traders \cite{Touzo_2021}. In the latter, the authors established a general analogy between the Glosten–Milgrom model \cite{Glosten_1985} and the Szil\'{a}rd information engine. Their analysis considered a simplified market in which a stock can take on only two values: $1$ with probability $p$, or $0$ with probability $1 - p$. To define \textit{market temperature}---a measure of noise in the market---they employed the Boltzmann distribution.

The Boltzmann price introduced in this paper recalls the idea that appeared in Prediction Markets (PMs). However, in PMs, the price of a security $i$ is derived as the partial derivative of the cost function with respect to its quantity $q_i$. This price represents the cost of purchasing an infinitesimal quantity of a share. The security $i$ pays $1$ unit (e.g., dollar) if outcome $i$ occurs, and its price reflects the estimated probability of that outcome. For mutually exclusive and exhaustive outcomes, the prices associated with these outcomes sum up to 1. One of the most widely used automated market makers (AMMs) in Internet PMs is the Logarithmic Market Scoring Rule (LMSR) \cite{Hanson_2007} and its variant, the Liquidity Sensitive Logarithmic Market Scoring Rule (LS-LMSR) \cite{Othman_2013}. There is a great deal of literature on PMs, including \cite{Hanson_2003, Wolfers_2004, Chen_2007, Lekwijit_2018,Carvalho_2023,Bossaerts_2024}.

Fair price discovery is important in terms of market impact and execution costs. Relevant papers on this topic include, but are not limited to, \cite{Almgren_1999, Almgren_2001, Almgren_2003, Gabaix_2003, Huberman_2004, Almgren_2005, Wyart_2008, Bouchaud_2009, Gatheral_2010, Eisler_2012, Hautsch_2012, Gueant_2013a, Farmer_2013, Bacry_2014, Mastromatteo_2014, Donier_2015, Zarinelli_2015, Said_2017, Lemhadri_2019, Danyliv_2022}.

There is extensive research on the heavy-tailed distribution of financial returns, as documented in numerous studies \cite{Kendall_1953, Mandelbrot_1963, Mandelbrot_1967, Fama_1970, Bollerslev_1987, Andersen_1996, Shephard_1996, Embrechts_1997, Gopikrishnan_1998, Lux_1999, Mittnik_1999, McNeil_2000, Cont_2001, Huisman_2001, Bouchaud_2003, Bradley_2003, Gabaix_2003, Chavez-Demoulin_2012, Taleb_2020}. This topic is also closely related to the substantial literature on stochastic volatility modeling; see, for example, \cite{Andersen_2009, Tsay_2010, Bergomi_2015}.



\section{General Framework}

Let $X \colon \mathcal{X} \rightarrow \mathbb{R}$ be a discrete random variable, where $\mathcal{X} = \{ s_1, \ldots, s_N\}$ is the finite state space. Let $p_i = \mathbb{P}(X=s_i)$ denote the probability of the $i$-th state. The entropy (Shannon entropy) is given by:
\begin{equation}
    \mathcal{S}(X) = - \sum_{i=1}^N p_i \ln p_i.
\end{equation}
We can notice that if $p_i = 1$ for some $i$, which also means that the other states have zero probability, the entropy reaches the minimum value $0$. On the other hand, the maximum value is obtained when each state is equally probable ($p_1 = \ldots = p_N = \frac{1}{N}$). This means that the maximum entropy corresponds to the maximum uncertainty of the system \cite{Guiasu1985}.

Now, assume the system consists of two states: the bid state ($s_b$) and the ask state ($s_a$). Let $p_b = \mathbb{P}(X=s_b)$ and $p_a = \mathbb{P}(X=s_a)$ denote the probabilities of the bid and ask states, respectively. In this case, with only two states, the entropy simplifies to $\mathcal{S}(X) = -p_b\ln{p_b} - p_a\ln{p_a}$, where $p_b + p_a = 1$.

Our aim is to find the probability distribution that maximizes entropy under given constraints. This approach is based on the well-known principle of maximum entropy introduced by Jaynes \cite{Jaynes1957}, which asserts that such a distribution reflects the greatest uncertainty about the system, ensuring that no unjustified assumptions or biases are introduced.
In other words, the maximum entropy distribution represents our best understanding of the system given the available information.

\begin{assumption}
    The state probabilities $p_b = \mathbb{P}(X=s_b)$ and $p_a = \mathbb{P}(X=s_a)$ depend only on the volume imbalance $q$.\footnote{Instead of considering the total order values $Q^bP^b$ and $Q^aP^a$, we focus on the bid and ask imbalances $q^b$ and $q^a$. This is justified for two reasons. First, for highly liquid assets, the bid-ask spread is typically small ($P^b \approx P^a$), so the influence of price differences is minimal. Second, normalizing $Q^b$ and $Q^a$ allows for a standardized comparison across assets with different price levels, ensuring a consistent modeling approach.}
\end{assumption}

In other words, the TOB volume imbalance reflects all information available\footnote{Specifically, information accessible to non-informed traders.} in the market concerning the fundamental value of the asset. For simplicity, we assume a constant spread, leaving the analysis of spread effects on price to future research. Additionally, aggregating information from higher-level order book data could be relevant, particularly for large orders (see, e.g., \cite{Meyer_2019}). Such an analysis, however, is beyond the scope of this study.

We will use standard techniques of Lagrange multipliers to maximize the Shannon entropy under the following constraints:
\begin{equation}
    \begin{cases}
      p_b + p_a = 1\\
      p_b q^b + p_a q^a = \langle q \rangle,
    \end{cases}\,
\end{equation}
where $\langle q \rangle$ is the average of $q$. Let us denote $p_1=p^b$, $p_2 = p^a$ and $q_1 = q^b$, $q_2 = q^a$. The Lagrange function is given by:
\begin{equation}
    L_{\mathcal{S}} = -\left( p_1\ln{p_1} + p_2\ln{p_2} \right) - \lambda \left( p_1 + p_2 - 1 \right) - \beta \left( p_1 q_1 + p_2 q_2 - \langle q \rangle \right).
\end{equation}
Then we have:
\begin{align}
    \frac{\partial L_{\mathcal{S}}}{\partial p_i} = - \ln{p_i} - 1 - \lambda - \beta q_i = 0, \\
    \frac{\partial L_{\mathcal{S}}}{\partial \lambda} = p_1 + p_2 - 1 = 0. \label{eq:prob_sum}
\end{align}
Solving for $p_i$ we get $p_i = e^{-1-\lambda}e^{-\beta q_i}$.
To solve the equation for $\lambda$, we use the constraint~\ref{eq:prob_sum}.
\begin{equation*}
    e^{-1-\lambda} \left( e^{-\beta q_1} + e^{-\beta q_2} \right) = 1,
\end{equation*}
hence $e^{1+\lambda} = \sum_i e^{-\beta q_i}$ where $\sum_i e^{-\beta q_i}$ is the so-called partition function, acting as a normalization factor. Finally, we obtain:
\begin{equation}
    p_i = \frac{e^{-\beta q_i}}{\sum_i e^{-\beta q_i}}.
\end{equation}
In our case, using the previous notation:
\begin{equation}
    p_b = \frac{e^{-\beta q^b}}{e^{-\beta q^b} + e^{-\beta q^a}} \label{eq:prob_b}
\end{equation}
and
\begin{equation}
    p_a = \frac{e^{-\beta q^a}}{e^{-\beta q^b} + e^{-\beta q^a}}. \label{eq:prob_a}
\end{equation}
One might wonder whether $\beta$ can be determined within this framework. In our case, $q^b$ and $q^a = 1-q^b$ are variable, $\langle q \rangle$ is unknown, and $\beta$ remains a free parameter of the model. This situation is analogous to statistical mechanics, where the imbalance can be interpreted as the energy of a given state.\footnote{In kinetic theory of gases, $\beta = \frac{1}{k_B T}$, where $k_B$ is the Boltzmann constant. In this context, $\beta$ can be derived from the known average energy: $\langle E \rangle = \frac{1}{2} k_B T$.}

By maximizing the Shannon entropy subject to certain constraints, we obtain the Boltzmann distribution,\footnote{In some literature, the Boltzmann distribution is referred to as the Boltzmann-Gibbs distribution.} which describes the probability of finding the system in a given state. This distribution can be used to derive the fair price in our model. If we associate the system states $s_b$ and $s_a$ with the bid and ask prices $P^b$ and $P^a$, respectively, then the fair price corresponds to the expected value of the random variable $X \in \{ P^b, P^a\}$,
with respect to the given Boltzmann distribution $\mathbb{P} = \mathbb{P}^{\textit{boltzmann}}$. That is, the fair price can be defined as $\mathbb{E}_{\mathbb{P}^{\textit{boltzmann}}}X$. We refer to this as the Boltzmann price.

Formally, we define the Boltzmann price as follows.

\begin{definition}
Let $\beta \geq 0$,
    \begin{equation}
        P^{\textit{boltzmann}}\left( \beta \right) = \frac{e^{-\beta q^b} P^b + e^{-\beta q^a} P^a}{e^{-\beta q^b} + e^{-\beta q^a}} = \textit{softmax}\left( -\beta q^b, -\beta q^a \right) \cdot \left( P^b, P^a \right),
    \end{equation}
\end{definition}
where $\cdot$ is a dot product, and $\textit{softmax}\left( x_1, \ldots, x_n \right) = \left( \frac{e^{x_1}}{\sum_{i=1}^n e^{x_i}}, \ldots, \frac{e^{x_n}}{\sum_{i=1}^n e^{x_i}}\right)$.\footnote{Equivalently, we can write the Bolztmann price as matrix product $P^{\textit{boltzmann}}\left( \beta \right) = \textit{softmax}\left( -\beta q^b, -\beta q^a \right) \left( P^b, P^a \right)^T$, where $T$ denotes the transpose.}

We observe that, $P^{\textit{boltzmann}}\left( \beta \right) = \textit{softmax}\left( \beta q^a, \beta q^b \right) \cdot \left( P^b, P^a \right)$.

One of the prices, defined below, will be of particular interest to us, as will be justified in the following sections.

\begin{definition}
The equilibrium price is given by:
    \begin{equation}
        P^{\textit{eq}} = P^{\textit{boltzmann}}\left( 1 \right) = \frac{e^{-q^b}P^b + e^{-q^a}P^a}{e^{-q^b} + e^{-q^a}}.
    \end{equation}
\end{definition}

\section{Boltzmann Price Decomposition}

The Boltzmann price can be interpreted as the expected value of the random variable $X \in \{ P^{b}, P^{a}\}$ with a Bernoulli distribution determined by $p_b, p_a$ (see Equations~\eqref{eq:prob_b}, \eqref{eq:prob_a}). In contrast, the micro-price \cite{Stoikov_2018} is defined as the mid-price adjusted by the expected future mid-price changes. Following this perspective, we can decompose the Boltzmann price into the mid-price and an adjustment term. Importantly, this adjustment depends on both the spread and the volume imbalance. As we will demonstrate later, the Boltzmann price can approximate the estimated micro-price in certain cases.

\begin{lemma}[Price Decomposition] \label{prop:price_approx}
    Let $q^b = q =\frac{1}{2} + \theta$, where $\theta \in \left[ -\frac{1}{2}, \frac{1}{2} \right]$. Then the first-order as well as the second-order approximation of the Boltzmann price is given by:
    \begin{equation}
        P^{\textit{boltzmann}}\left( \beta \right) = P^{\textit{mid}} + \frac{\beta \left( P^a - P^b \right)}{2} \theta + \bigO \left( \theta^3 \right)
    \end{equation}
    and
    \begin{equation} \label{eq:mid_w}
        P^{\textit{mid}} + \frac{\beta \left( P^a - P^b \right)}{2} \theta = \left( 1 - \frac{\beta}{2} \right) P^{\textit{mid}} + \frac{\beta}{2} P^{\textit{w}}. 
    \end{equation}
\end{lemma}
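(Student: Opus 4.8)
The plan is to first obtain an exact closed form for $P^{\textit{boltzmann}}(\beta)$ and only then pass to the expansion. Writing $q^b = \tfrac12 + \theta$ and $q^a = \tfrac12 - \theta$, the common factor $e^{-\beta/2}$ cancels between numerator and denominator, leaving
\begin{equation*}
P^{\textit{boltzmann}}(\beta) = \frac{e^{-\beta\theta}P^b + e^{\beta\theta}P^a}{e^{-\beta\theta}+e^{\beta\theta}}.
\end{equation*}
Using $e^{\pm\beta\theta} = \cosh(\beta\theta)\pm\sinh(\beta\theta)$, the numerator becomes $(P^b+P^a)\cosh(\beta\theta) + (P^a-P^b)\sinh(\beta\theta)$ and the denominator becomes $2\cosh(\beta\theta)$, so
\begin{equation*}
P^{\textit{boltzmann}}(\beta) = P^{\textit{mid}} + \frac{P^a-P^b}{2}\tanh(\beta\theta).
\end{equation*}
This exact identity is really the crux of the argument; everything after it is bookkeeping.

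Next I would invoke the Taylor expansion $\tanh(x) = x - \tfrac{x^3}{3} + \bigO(x^5)$. Since $\beta$ is a fixed model parameter and $\theta \in \left[-\tfrac12,\tfrac12\right]$, the argument $\beta\theta$ stays in a bounded interval, so Taylor's theorem with the Lagrange form of the remainder gives $\tanh(\beta\theta) = \beta\theta + \bigO(\theta^3)$ with a uniform constant. Substituting yields $P^{\textit{boltzmann}}(\beta) = P^{\textit{mid}} + \tfrac{\beta(P^a-P^b)}{2}\theta + \bigO(\theta^3)$. Because $\tanh$ is odd there is no $\theta^2$ term, which is precisely why the first- and second-order truncations coincide — a point worth stating explicitly so the phrasing of the lemma is justified.

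For the second displayed equation I would substitute $q^b=\tfrac12+\theta$, $q^a=\tfrac12-\theta$ into $P^{\textit{w}} = q^bP^a + q^aP^b$, obtaining $P^{\textit{w}} = P^{\textit{mid}} + \theta(P^a-P^b)$, hence $\theta(P^a-P^b) = P^{\textit{w}} - P^{\textit{mid}}$. Plugging this into the left-hand side gives $P^{\textit{mid}} + \tfrac{\beta}{2}\theta(P^a-P^b) = P^{\textit{mid}} + \tfrac{\beta}{2}\left(P^{\textit{w}}-P^{\textit{mid}}\right) = \left(1-\tfrac{\beta}{2}\right)P^{\textit{mid}} + \tfrac{\beta}{2}P^{\textit{w}}$, which is the claimed identity.

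I do not anticipate a genuine obstacle here: the only things to be careful about are making the asymptotic statement precise (which variable the $\bigO$ refers to, and that $\beta$ is held fixed rather than being part of the limiting regime), and keeping the sign convention $P^a - P^b = S > 0$ straight so that the adjustment term $\tfrac{\beta S}{2}\theta$ points in the expected direction relative to the imbalance.
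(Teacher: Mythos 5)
Your proof is correct, and it takes a mildly but genuinely different route from the paper's. The paper works directly with the quotient $f(x) = \frac{e^{-\beta x} P^b + e^{\beta x} P^a}{e^{-\beta x} + e^{\beta x}}$, computes $f'$ and $f''$ explicitly, observes that $f''(0)=0$, and reads off the expansion from Taylor's theorem (with the third derivative controlling the remainder). You instead first establish the exact closed form $P^{\textit{boltzmann}}(\beta) = P^{\textit{mid}} + \frac{P^a-P^b}{2}\tanh(\beta\theta)$ via the $\cosh$/$\sinh$ decomposition, and then expand $\tanh$. Your version buys something the paper's proof does not state: the exact identity makes the coincidence of the first- and second-order truncations an immediate consequence of the oddness of $\tanh$ (rather than a computation showing $f''(0)=0$), and it is the cleanest explanation of why $\tanh(\beta\theta)$ and $\cosh^{-2}(\beta\theta)$ reappear later in the paper, e.g.\ in the derivative $\left(P^{\textit{boltzmann}}\right)'(\theta) = \beta\,\frac{S/2}{\cosh^2(\beta\theta)}$ and in the market-impact formula $\Delta P^{\textit{boltzmann}} = \frac{S}{2}\left[\tanh(\beta\theta_1)-\tanh(\beta\theta_0)\right]$, which are exactly the derivative and increments of your closed form. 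The paper's approach is marginally more self-contained (no appeal to the known series of $\tanh$) and yields an explicit Lagrange-remainder bound via $f^{(3)}$, but both deliver the same conclusion; your treatment of the second identity, substituting $\theta(P^a-P^b) = P^{\textit{w}} - P^{\textit{mid}}$, is an equivalent and slightly tidier rearrangement of the paper's direct algebra.
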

\begin{proof}
    We will approximate $P^{\textit{boltzmann}}\left( \beta \right)$ around $q^b=\frac{1}{2}$. By definition,
    \begin{equation*}
        P^{\textit{boltzmann}}\left( \beta \right) = \frac{e^{-\beta q} \cdot P^b + e^{-\beta (1-q)} \cdot P^a}{e^{-\beta q} + e^{-\beta (1-q)}} = \frac{e^{-\frac{\beta}{2}} \left( e^{-\beta \theta} \cdot P^b + e^{\beta \theta} \cdot P^a \right)}{e^{-\frac{\beta}{2}} \left( e^{-\beta \theta} + e^{\beta \theta} \right)}.
    \end{equation*}
    Simplifying the notation, let
    \begin{equation*}
        f(x) = \frac{e^{-\beta x} \cdot b + e^{\beta x} \cdot a}{e^{-\beta x} + e^{\beta x}}.
    \end{equation*}
    To obtain first-order approximation and estimate the error, we need to calculate the first and second derivatives of $f$. We have:
    \begin{equation*}
        f^{\prime}(x) = \frac{2 \beta \left( a-b \right)}{\left( e^{-\beta x} + e^{\beta x}\right)^2}
    \end{equation*}
    and the second derivative of $f$ is equal to
    \begin{equation*}
        f^{\prime\prime}(x) = \frac{4 \beta^2 \left( a-b \right) \left( e^{-\beta x} - e^{\beta x} \right)}{\left( e^{-\beta x} + e^{\beta x}\right)^3}.
    \end{equation*}
    Notice that $f^{\prime\prime}\left( 0 \right) = 0$, so the first- and second-order approximations coincide. Using the Taylor expansion at $x=0$, we obtain,\footnote{The third-order derivative is equal to $f^{(3)}(x) = 8 \beta^3 (a-b) \frac{e^{2\beta x} + e^{-2\beta x} - 4}{\left(e^{-\beta x} + e^{\beta x}\right)^4}$ and we can estimate the approximation error using the Lagrange reminder, which is bounded by $\max_{\xi \in (-\theta, \theta)}\frac{\abs{f^{(3)}(\xi)}}{6} \theta^3$.}
    \begin{equation*}
        f(x) = f(0) + \frac{f^{\prime}(0)}{1!} x + \frac{f^{\prime\prime}(0)}{2!} x^2 + \bigO \left( x^3 \right) = \frac{b+a}{2} + \frac{\beta \left( a - b \right)}{2} x + \bigO \left( x^3 \right).
    \end{equation*}
    To prove Equation~\eqref{eq:mid_w}, we note that
    \begin{align*}
        & P^{\textit{mid}} + \frac{\beta}{2} \left( P^a - P^b \right) \left( q^b - \frac{1}{2} \right) = \frac{P^b + P^a}{2} + \frac{\beta \left[ P^a (q^b - \frac{1}{2}) + P^b (q^a - \frac{1}{2})\right]}{2} = \\
        & \frac{P^a \left( 1 - \frac{\beta}{2} + \beta q^b \right) + P^b \left( 1 - \frac{\beta}{2} + \beta q^a \right)}{2} = \left( 1 - \frac{\beta}{2} \right) \frac{P^b + P^a}{2} + \frac{\beta}{2} \left( P^a q^b + P^b q^a \right).
    \end{align*}
\end{proof}

It is easy to see that
\begin{equation}
    P^{\textit{boltzmann}}\left( 2 \right) \approx P^{\textit{mid}} + \left( P^a - P^b \right) \left( q - \frac{1}{2} \right) = P_{\textit{w}}
\end{equation}
and
\begin{equation}
    P^{\textit{boltzmann}}\left( 0 \right) = P^{\textit{mid}}.
\end{equation}

Furthermore, the smaller the value of $\beta$ and the spread, the better the approximation becomes.

Due to Lemma \ref{prop:price_approx}, we can use $\frac{1}{2} \left( P^{\textit{mid}} + P^{\textit{w}} \right)$ as a good approximation of $P^{\textit{eq}}$. We refer to this as

\begin{definition}
    The quasi-equilibrium price is given by:
    \begin{equation}
        \tilde{P}^{\textit{eq}} = \frac{1}{2} \left( P^{\textit{mid}} + P^{\textit{w}} \right).
    \end{equation}
\end{definition}

\section{Modeling Price Process}

Real-world price processes have been observed to not align fully with Geometric Brownian Motion (GBM) or the Bachelier model \cite{Mandelbrot_1963, Fama_1970, Fama_1995, Cont_2001}. In particular, the distributions of price changes appear to be leptokurtic, meaning they exhibit an excess of observations both near the mean and in the tails of the distribution \cite{Kendall_1953, Fama_1965}.

Mandelbrot \cite{Mandelbrot_1963} suggested the existence of a more general form of the Bachelier model which can account for non-normality. However, the main focus in the research was in non-normal stable distributions with heavier tails, which can account for the empirically observed distributions. For example, \cite{Fama_1965} argued that non-normal stable distributions provide a more accurate description of daily stock returns. Some researchers have also suggested that heavy-tailed empirical distributions may result from mixtures of normal distributions, e.g., \cite{Press_1967, Mandelbrot_1967}.

If we assume that the Boltzmann price reflects the fundamental price in the LOB,
one might also attempt to describe the dynamics of the price process, on a short time scale, as a biased random walk with probabilities given by \eqref{eq:prob_b} and \eqref{eq:prob_a}. For now, we assume that the spread $S_t$ is constant over time, that is, $S_t = S$. In the following, we present the heuristic derivation of price dynamics.

We recall here the definition of the biased random walk.\footnote{\href{https://gordanz.github.io/stochastic-book/}{https://gordanz.github.io/stochastic-book/}}

\begin{definition}
    A stochastic process $\left( Y_n \right)_{n \in \mathbb{N}}$ is a biased random walk with increment $\epsilon > 0$ and parameter $p \in (0,1)$ if
    \begin{enumerate}
        \item $Y_0 = 0$,
        \item the random variables $\xi_1 = Y_1 - Y_0, \xi_2 = Y_2 - Y_1, \ldots$ are independent,
        \item  each $\xi_n$ has two-point distribution,
        \begin{equation}
            \mathbb{P}(\xi_n = \epsilon) = p \quad \text{and} \quad \mathbb{P}(\xi_n = -\epsilon) = 1 - p.
        \end{equation}
    \end{enumerate}
\end{definition}

Let $Y_n$ be a biased random walk with increment $\epsilon$. Calculate its expected value and variance after the $n$ time steps,

\begin{equation}
    \mathbb{E}\left[ Y_n \right] = n \left( p \epsilon - (1-p) \epsilon \right) = n \epsilon \left( 2p - 1 \right)
\end{equation}
and
\begin{equation}
    \text{Var}\left[ Y_n \right] = \mathbb{E}( Y_n^2 ) - \left( \mathbb{E}Y_n \right)^2 = n ( \epsilon^2 - \epsilon^2 \left( 2p-1 \right)^2 ) = 4 n \epsilon^2 p \left( 1-p \right).
\end{equation}

We proceed analogously as in the construction of the Bachelier model; see, e.g., \cite{Joshi_2008}.

Assume now that the fundamental price, denoted here by $P_n$, can change by $\epsilon$ or $-\epsilon$ per unit time with probabilities given by the bid and ask states.


Specifically, the probabilities of price change over one time step are:

\begin{equation}
    \mathbb{P}\left( P_{n+1} - P_{n} = \epsilon \right) = \frac{e^{\beta \theta_{n}}}{e^{\beta \theta_n} + e^{-\beta \theta_n}},
\end{equation}
\begin{equation}
    \mathbb{P}\left( P_{n+1} - P_{n} = -\epsilon \right) = \frac{e^{-\beta \theta_{n}}}{e^{\beta \theta_n} + e^{-\beta \theta_n}}.
\end{equation}

We observe that the probabilities of price changes may vary with each time step. Let us compute the expected value and the variance of the next price change. We have:

\begin{equation}
    \mu_n =\mathbb{E}\left[ P_{n+1} - P_{n} \right] = \frac{e^{\beta \theta_{n}}}{e^{\beta \theta_n} + e^{-\beta \theta_n}} \cdot\epsilon - \frac{e^{-\beta \theta_{n}}}{e^{\beta \theta_n} + e^{-\beta \theta_n}} \epsilon = \epsilon \cdot \frac{e^{\beta \theta_n} - e^{-\beta \theta_n}}{e^{\beta \theta_n} + e^{-\beta \theta_n}} = \epsilon \tanh{\left( \beta \theta_n \right)}
\end{equation}
and the variance is equal to:
\begin{equation}
    \sigma_n^2 = \Var\left[ P_{n+1} - P_{n} \right] = 4 \epsilon^2 \frac{e^{\beta \theta_n} \cdot e^{-\beta \theta_n}}{\left( e^{\beta \theta_n} + e^{-\beta \theta_n} \right)^2} = \frac{\epsilon^2}{\cosh^2{\left( \beta \theta_n \right)}}.
\end{equation}


Now, consider the time horizon $[0,T]$ and divide it into $N$ equal intervals $\Delta t = \frac{T}{N}$. To keep $\mu_n$ and $\sigma_n$ fixed as we vary $N$, we scale them to $\mu_n \Delta t$ and $\sigma_n \sqrt{\Delta t}$ for each time interval $\Delta t$. This ensures that the mean and standard deviation of the process at each step are appropriately adjusted.\footnote{Since mean and variance add for independent variables, the total mean is $\sum_{n=1}^N \mu_n \Delta t$, and the total variance is $\sum_{n=1}^N \sigma_{n}^{2} \Delta t$. If $\mu_n = \mu$ and $\sigma_n = \sigma$ are constant for all $n$, the total mean and variance simplify to $\mu t$ and $\sigma^{2} t$, respectively, where $t = N \Delta t$, consistent with the standard case.}

Formally, one can decompose the price process into a drift component and an unbiased term. At each time step $\Delta t$, the price can jump up or down with equal probability, defined as:
\begin{equation}
    \mathbb{P}\left( P_{i+\Delta t} - P_{i} = \mu_i \Delta t + \sigma_i \sqrt{\Delta t} \right) = \frac{1}{2},
\end{equation}
\begin{equation}
    \mathbb{P}\left( P_{i+\Delta t} - P_{i} = \mu_i \Delta t - \sigma_i \sqrt{\Delta t} \right) = \frac{1}{2}.
\end{equation}
We find that:
\begin{equation}
    \mathbb{E}\left[ P_{i+\Delta t} - P_{i} \right] = \mu_i \Delta t
\end{equation}
and
\begin{equation}
    \Var\left[ P_{i+\Delta t} - P_{i} \right] = \sigma_i^2 \Delta t.
\end{equation}
Given a discrete-time price process with time-varying drift and volatility, we have
\begin{equation}
    P_{n+1} = P_n + \mu_n \Delta t + \sigma_n \sqrt{\Delta t } Z_n,
\end{equation}
where $Z_n$ has the Bernoulli distribution with $\mathbb{E}\left( Z_n \right) = 0$ and $\Var\left( Z_n \right) = 1$. The fundamental price after $n$ time steps of size $\Delta t$ is given by:
\begin{equation}
    P_N = P_0 + \sum_{i=0}^{N-1} \mu_i \Delta t + \sum_{i=0}^{N-1} \sigma_i \sqrt{\Delta t} Z_i.
\end{equation}
Hence,
\begin{equation}
    P_N = P_0 + \sum_{i=0}^{N-1} \epsilon\tanh{\left( \beta \theta_i \right)} \Delta t + \sum_{i=0}^{N-1} \frac{\epsilon}{\cosh{\left( \beta \theta_i \right)}} \sqrt{\Delta t} Z_i
\end{equation}
and
\begin{equation}
    \mathbb{E}\left[ P_N \right] = S_0 + \epsilon \sum_{i=0}^{N-1} \tanh{\left( \beta \theta_i \right)} \Delta t,
\end{equation}
\begin{equation}
    \Var\left[ P_N \right] = \epsilon^2 \sum_{i=0}^{N-1} \frac{1}{\cosh^2{\left( \beta \theta_i \right)}} \Delta t.
\end{equation}
By the \textit{Central Limit Theorem} when $N$ goes to infinity and $\Delta t \to 0$, the distribution of $\sum_{i=0}^{N-1} \sqrt{\Delta t} Z_i$ will tend to a Gaussian random variable, with mean $0$ and variance $1$.

Having $T = N \Delta t$ and assuming $N \to \infty$ and $\Delta t \to 0$, we get the following:

\begin{equation}
    \sum_{i=0}^{N-1} \mu_i \Delta t \rightarrow \int_0^T \mu_s ds
\end{equation}
and
\begin{equation}
    \sum_{i=0}^{N-1} \sigma_i \sqrt{\Delta t} Z_i \rightarrow \int_0^T \sigma_s dW_s.
\end{equation}
It follows that:
\begin{equation}
    P_T = P_0 + \int_0^T \mu_s ds + \int_0^T \sigma_s dW_s.
\end{equation}
Hence, the stochastic differential equation of the price process can be written as:
\begin{equation}
    dP_t = \epsilon\tanh{\left( \beta \theta_t \right)} dt + \frac{\epsilon}{\cosh{\left( \beta \theta_t \right)}} dW_t.
\end{equation}
Now, observe that the second-order approximation of $f(x) = \epsilon \tanh{\left( \beta x \right)}$ around $x=0$ is given by:
\begin{equation}
    \epsilon \tanh{\left( \beta \theta \right)} = \beta \epsilon \theta + \bigO \left( \theta^3 \right).
\end{equation}
We find a correspondence with Lemma~\ref{prop:price_approx} (Price Decomposition). A natural candidate for $\epsilon$ is half the spread, $\frac{S}{2}$. For such $\epsilon$, the drift term represents the adjustment to the mid-price given by the Boltzmann price.\footnote{One can also argue that, when the TOB is balanced, then both bid and ask probabilities are equal, and the Boltzmann price coincides with the mid price. Then it is natural for the price to jump by half the spread with some probabilities. So, the average move should be proportional to half the spread.}

This is analogous to Roll's model \cite{Roll_1984}, in which the observed market price $P_t^{\textit{market}}$ is equal to
\begin{equation}
    P_{t}^{\textit{market}} = P_{t}^{\ast} + \zeta_t \frac{S}{2},
\end{equation}
where $P_t^{\ast}$ denotes the fundamental asset value, and $\zeta_t \in \{ -1, 1\}$ is a sequence of i.i.d. random variables with the Rademacher distribution $\mathbb{P}(\zeta_t = 1) = \mathbb{P}(\zeta_t = -1) = \frac{1}{2}$.\footnote{See \cite{Tsay_2010} for further details.}

In our case, assuming a constant spread, we obtain:
\begin{equation} \label{eq:price_approx}
    P_t^{\textit{boltzmann}} \approx P_t^{\textit{mid}} + \beta \theta_t \frac{S}{2}.
\end{equation}
In general, the fundamental price can fluctuate with varying frequency and magnitude. In continuous-time modeling, we aim to define a general-form equation as follows:
\begin{equation} \label{eq:price_dynamics}
    dP_t = \sigma \left( \tanh{\left( \beta \theta_t \right)} dt + \frac{dW_t}{\cosh{\left( \beta \theta_t \right)}} \right),
\end{equation}
and we might also add an additional drift term $\mu dt$ and estimate parameters $\mu$ and $\sigma$ from data.\footnote{Over short time periods, the drift term $\mu dt$ is not justified within our framework. However, if we consider longer time horizons, it may be appropriate to include factors such as the time value of money, which makes the additional drift term natural.} Here, the spread is assumed constant.
Without the assumption of constant spread, the equation might be defined as follows:
\begin{equation}\label{eq:price_dynamics_varying_spread}
    dP_t = \eta S_t \left(\tanh{\left( \beta \theta_t \right)} dt + \frac{dW_t}{\cosh{\left( \beta \theta_t \right)}}\right).
\end{equation}
It is important to note that, in practice, the drift term $dt$ and the volatility term $dW_t$ scale differently---namely with $\Delta t$ and $\sqrt{\Delta t}$, respectively---resulting in different constants when a specific time step $\Delta t$ is considered.

It is clear that as $\beta \to 0$, Equation~\eqref{eq:price_dynamics} reduces to the standard Bachelier model without a drift term, that is, $dP_t = \sigma dW_t$.

In our model, the volatility $\sigma$ should be proportional to half the spread $\frac{S}{2}$, i.e., $\sigma \sim \frac{S}{2}$. This aligns with the existing literature, where, for example, it is observed that the spread is proportional to the volatility per trade. \cite{Wyart_2008} Generally, higher volatility implies greater uncertainty and risk for market participants. It is noted that to compensate for that risk the spread is widened.

Since we consider a short time period, the difference between the Bachelier model and the GBM is negligible. However, an equivalent GBM-like equation could be formulated as follows:\footnote{We leave this model for future work. Using similar reasoning, i.e., modeling the price as an exponential biased random walk, we may attempt to derive this equation.}
\begin{equation} \label{eq:GBM_equivalent}
    dP_t = \sigma P_t \left( \tanh{\left( \beta \theta_t \right)} dt + \frac{dW_t}{\cosh{\left( \beta \theta_t \right)}} \right),
\end{equation}
which again approximates the standard Black-Scholes equation without drift as $\beta \to 0$. We may also include an additional drift term, $\mu P_t dt$. However, Equation~\eqref{eq:GBM_equivalent} fundamentally contains only two parameters, $\sigma$ and $\beta$, as in the standard GBM model.

We leave a formal study of the proposed equations to future research. For now, focusing solely on Equation~\eqref{eq:price_dynamics}, one may assume that $\theta_t$ is continuous and adapted to the filtration generated by $W_s(\cdot)$, $s \leq t$ and $P_0$. However, the assumption of continuity for $\theta_t$ may not be realistic, as it tends to fluctuate significantly in practice. A more natural assumption is to take $\theta_t$ as a predictable process. In this context, predictability reflects the idea that the volume imbalance prior to time $t$ influences the price at time $t$. This observation is consistent with findings in the literature. For example, \cite{Cont_2014} show that, over short time intervals, price changes are primarily driven by the order flow imbalance at the best bid and ask. Similarly, \cite{Gould_2016} demonstrate that queue imbalance carries significant predictive power for the direction of the next price movement. Furthermore, \cite{Toth_2015} find that order flow in equity markets exhibits persistence---buy orders tend to be followed by more buy orders, and likewise for sell orders.
These findings support the view that the volume imbalance at time $t$ is highly dependent on the state of the TOB shortly before time $t$, justifying its modeling as a predictive process.\footnote{Both functions appearing in Equation~\eqref{eq:price_dynamics} are bounded, i.e., $\abs{\tanh{x}} \leq 1$ and $\abs{\frac{1}{\cosh{x}}} \leq 1$, satisfying the conditions of the existence and uniqueness theorem for stochastic differential equations \cite{Oksendal_2010}. It is natural to consider the spread to be bounded. To model potential discontinuities, one may assume, for example, that $S_t$, $\theta_t$ are c{\`a}dl{\`a}g. However, this involves accounting for the left limits.}

The main analysis is structured into two components: simulations and historical data analysis, which are presented in the following sections.

\section{Simulation Analysis}

This section presents a comparison between the Bachelier model and the proposed dynamics of the price process described in Equations~\eqref{eq:price_dynamics} and \eqref{eq:price_dynamics_varying_spread}. The primary focus is on examining heavy-tailed distributions in price changes and market impact. Given the availability of historical data, we estimate both the spread and the imbalance. The imbalance is modeled using a Beta distribution, while the spread is modeled using a Gamma distribution, in line with the approach of \cite{Lo_2002}.\footnote{We use the parametrization of the Gamma distribution with a shape parameter and a scale parameter.}

Throughout this paper, we refer to excess kurtosis (also known as Fisher kurtosis), defined as $\textit{Kurt}(X)-3$, where
\begin{equation}
    \textit{Kurt}(X) = \mathbb{E}\left[ \frac{X - \mu}{\sigma} \right]
\end{equation}
and $\mu$ and $\sigma$ are the mean and the standard deviation of the random variable $X$, respectively.\footnote{Kurtosis is computed using the \texttt{kurtosis(Fisher=True)} function from the \texttt{scipy.stats} library.}

\subsection{Simulation with Varying Spread}

In this example, we choose symmetric imbalance centered and concentrated around $\frac{1}{2}$, specifically $\textit{Beta}(4.5,4.5)$.
For the spread, we sample from the distribution $\Gamma(1,1)$. We assume $\beta = 1$ and set the tick size to 0.01. To ensure that the spread is a multiple of the tick size, we take the ceiling of each sampled value and multiply it by 0.01. In Figure~\ref{fig:sim_imb_spread_beta1}, we display the sampled spread and imbalance from a representative simulation.

\begin{figure}[htbp]
    \centering
    \subfloat[\centering Imbalance $q$ sampled from $\textit{Beta}(4.5,4.5)$.]{{\includegraphics[scale=0.4]{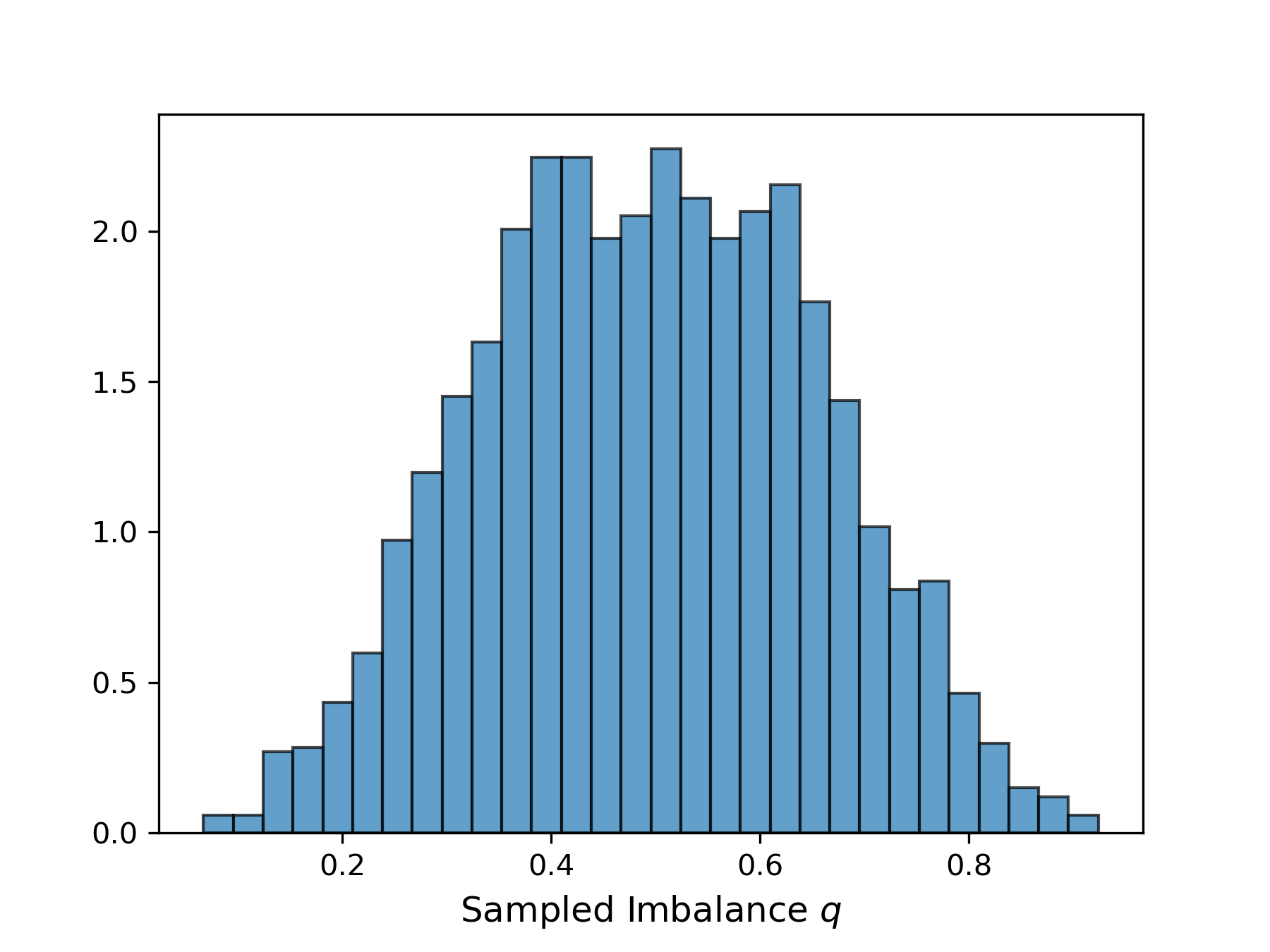} }}%
    \qquad
    \subfloat[\centering Spread sampled from $\Gamma(1,1)$.]{{\includegraphics[scale=0.4]{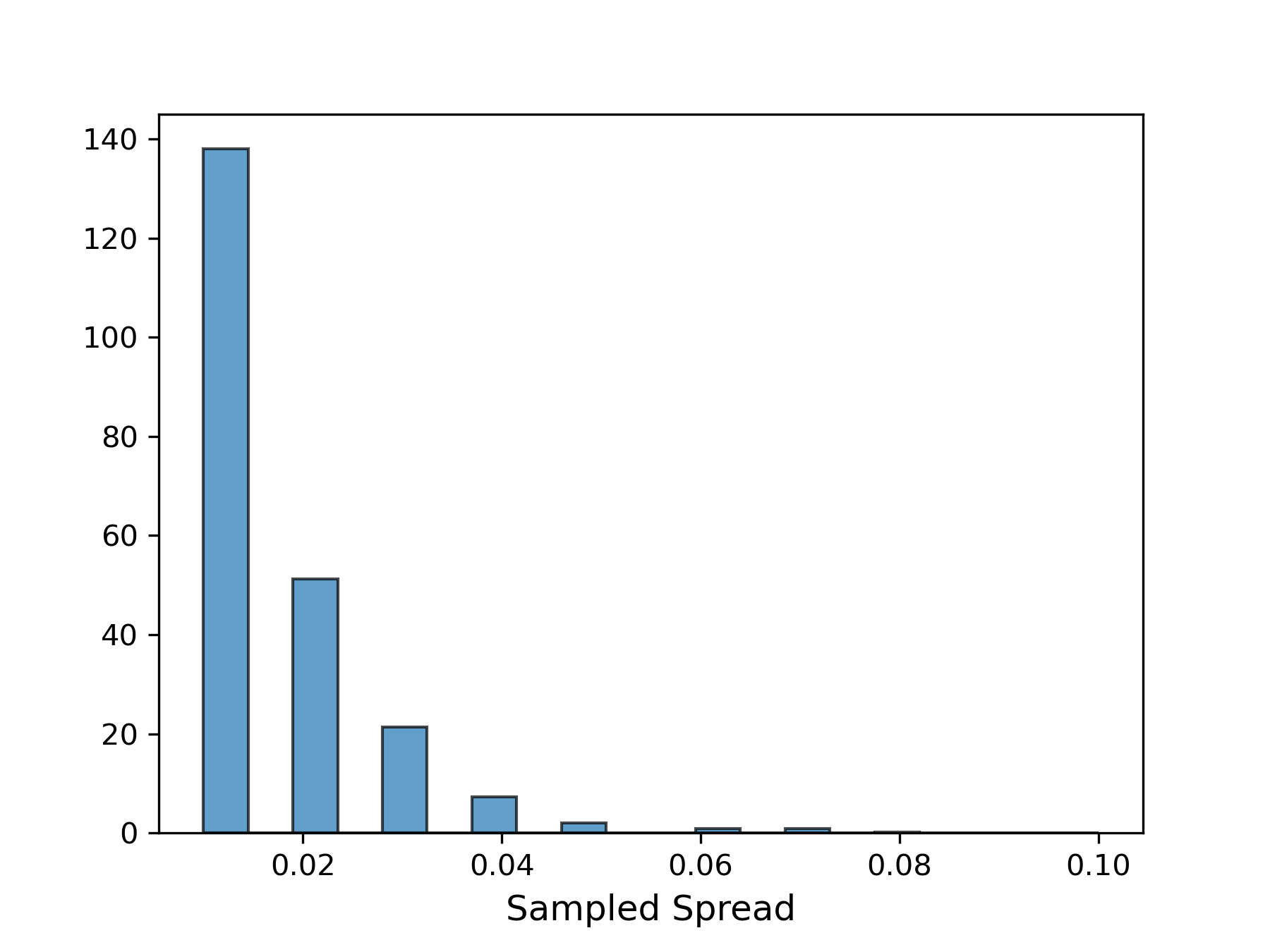} }}%
    \caption{Sampled imbalance and spread used in a representative price simulation with $\beta=1$.}
    \label{fig:sim_imb_spread_beta1}
\end{figure}

For this simulation we take 2,340 time steps with time increment $\Delta t = \frac{1}{N}$. This, e.g., can be interpreted as 6.5 hours of trading (23,400s) with the time step 10 seconds.
\Cref{tab:sim_varying_spread_kurtosis_1000_summary} presents the kurtosis statistics of price changes across 1000 simulation runs. In Equation~\eqref{eq:price_dynamics_varying_spread} we use the scaling parameter $\eta$ which was selected to provide similar means and standard deviations of price changes for both models; see \Cref{tab:sim_varying_spread_mean_and_std}.

\begin{table}[htb]
    \begin{minipage}{.48\linewidth}
      \centering
        \begin{tabular}{l c}
\hline
Statistic & Kurtosis \\
\hline
Mean      & 7.29 \\
Std       & 3.16 \\
Min       & 3.18 \\
Max       & 41.8 \\
\hline
\\
\end{tabular}
        \caption{Summary statistics of kurtosis values, rounded
        to two decimal places, from 1,000 simulations
        with $\beta=1$, where imbalance is sampled from
        a $\textit{Beta}(4.5,4.5)$ distribution and the spread
        is sampled from a $\Gamma(1,1)$ distribution.}
                \label{tab:sim_varying_spread_kurtosis_1000_summary}
    \end{minipage}
    \hspace{0.04\linewidth}
    \begin{minipage}{.48\linewidth}
      \centering
        \begin{tabular}{l cc}
        \\
        \\
\hline
Statistic & Bachelier $\Delta P$ & Model $\Delta P$\\
\hline
Mean    & \num{-7.99e-7} & \num{-7.87e-7}  \\
Std     & \num{1.034e-3} & \num{1.038e-3}  \\
\hline
\\
\end{tabular}
        \caption{Mean and standard deviation of price changes, averaged over 1,000 simulations, from our model with $\beta = 1$, compared to the Bachelier model.    
        }
        \label{tab:sim_varying_spread_mean_and_std}
    \end{minipage}
\end{table}



Figure~\ref{fig:price_sim_with_spread_beta1} shows histograms and Kernel Density Estimations (KDEs) of price changes.

\begin{figure}[htbp]
    \centering
    \subfloat[\centering Histogram of price changes]{{\includegraphics[scale=0.4]{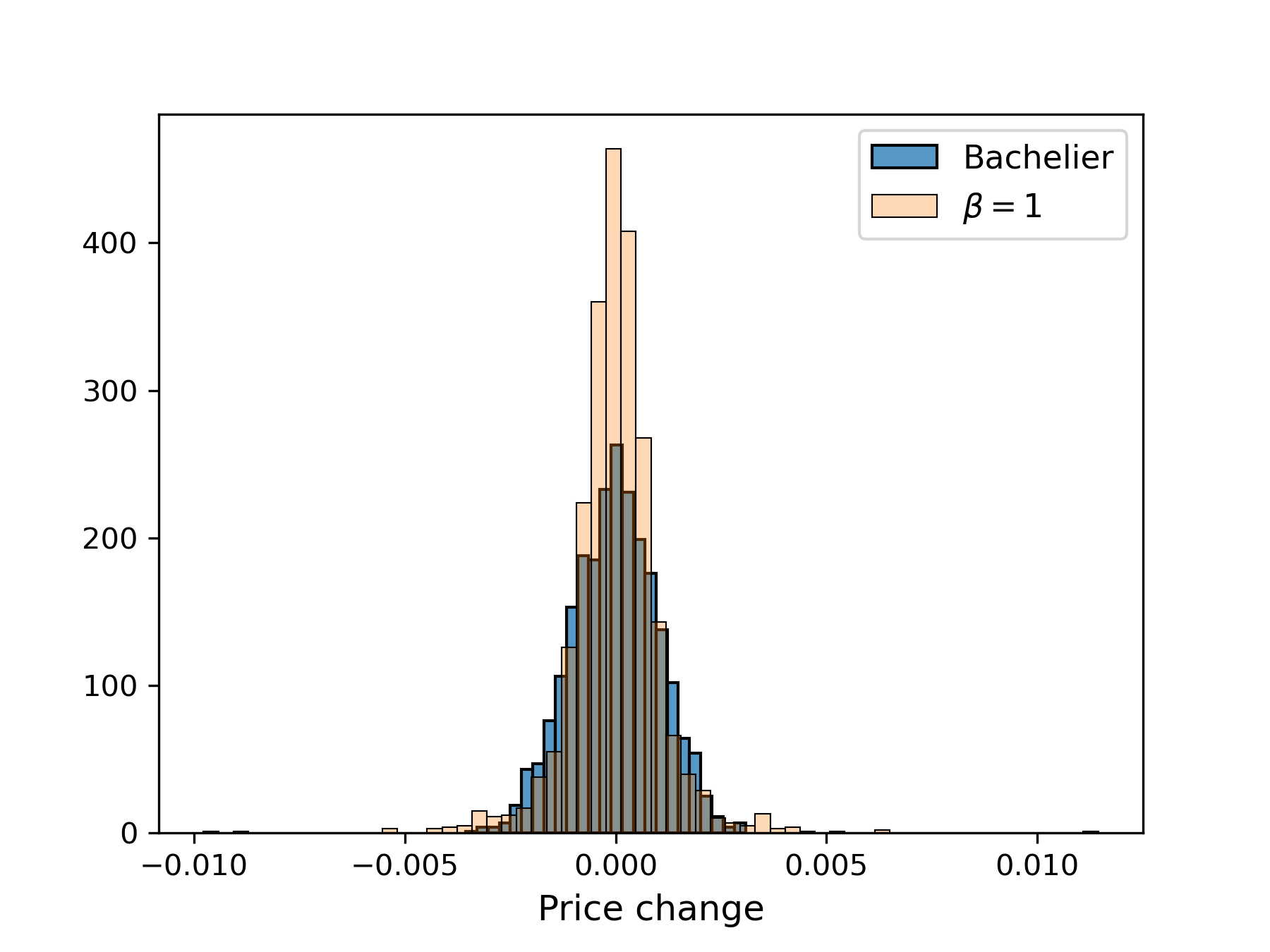} }}%
    \qquad
    \subfloat[\centering KDE of price changes]{{\includegraphics[scale=0.4]{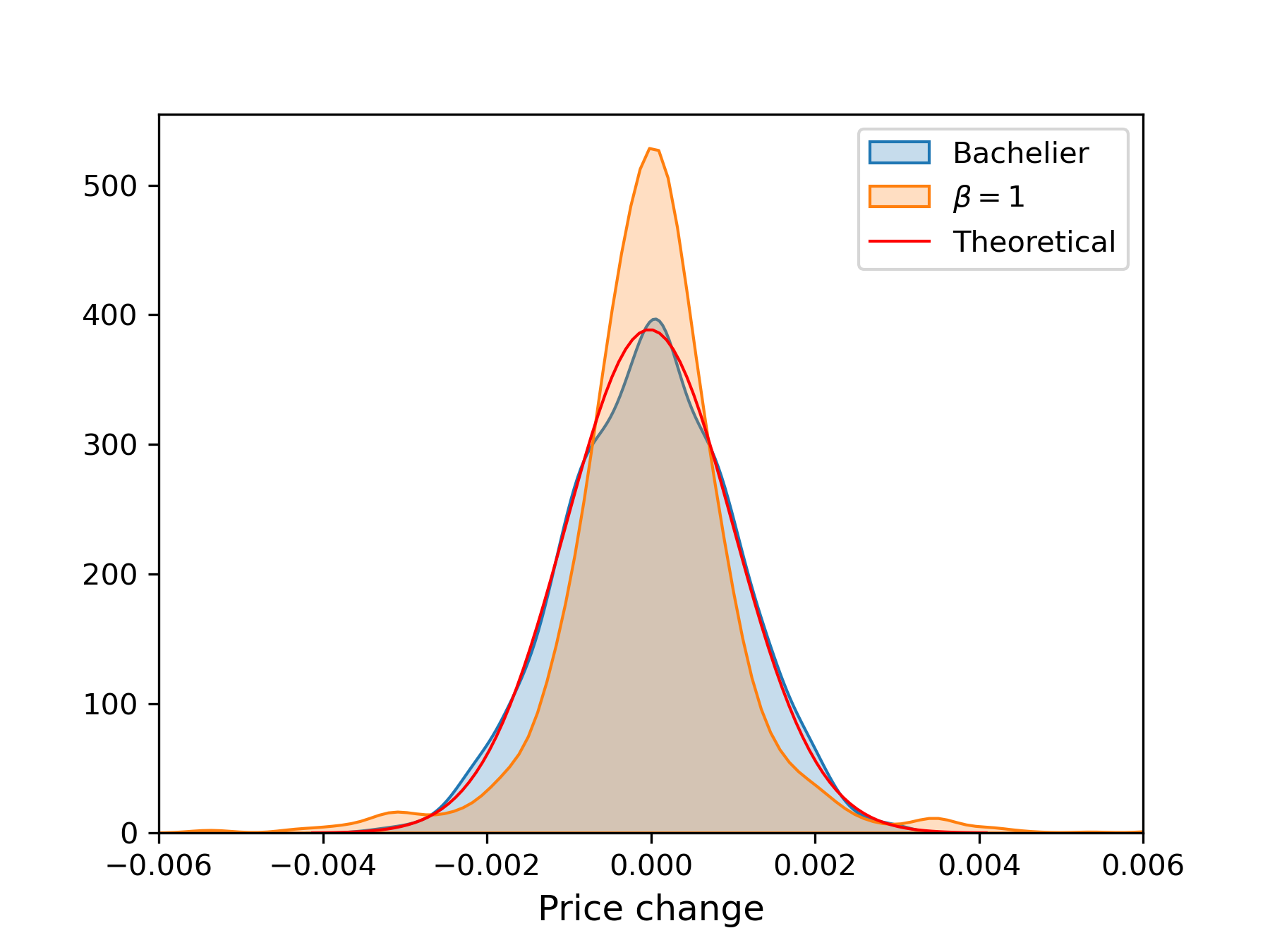} }}
    \caption{Price changes from the Bachelier model without drift, $dP_t = \sigma dW_t$, compared to those generated by Equation~\eqref{eq:price_dynamics_varying_spread},
    with $P_{0} = 10$, $q_t \sim \textit{Beta}\left( 0.5,0.5 \right)$, $\sigma = 0.05$, $\eta = 2.75$, and $\theta_t = q_t -0.5$. Wiener process increments $dW_t$ are shared between both models. In this example, the simulated price changes exhibit a kurtosis of 12.95. In contrast, the standard Bachelier model yields price changes with a kurtosis close to 0.}
    \label{fig:price_sim_with_spread_beta1}
\end{figure}

In this particular example, the simulated price changes yield a kurtosis of 12.95. In contrast, the standard Bachelier model produces price changes with a kurtosis -0.09.
Specifically, the standard deviation of both simulated price changes is approximately 0.001.

As suggested by stochastic volatility models, changing volatility (here modeled by spread) is resulting in heavy tails; see, for example, \cite{Andersen_2009, Tsay_2010, Bergomi_2015}.

\subsection{Simulation with Constant Spread}\label{ssec:sim_const_spread}

Modeling the price process under a constant spread assumption presents additional challenges. We observe that the tails of price returns distributions become noticeably heavier as the parameter $\beta$ increases. However, a more critical factor appears to be the volume imbalance, particularly when the TOB displays a significant volume concentration on one side or follows a $U$-shaped distribution.

\begin{figure}[htbp]
    \centering
    \includegraphics[scale=0.4]{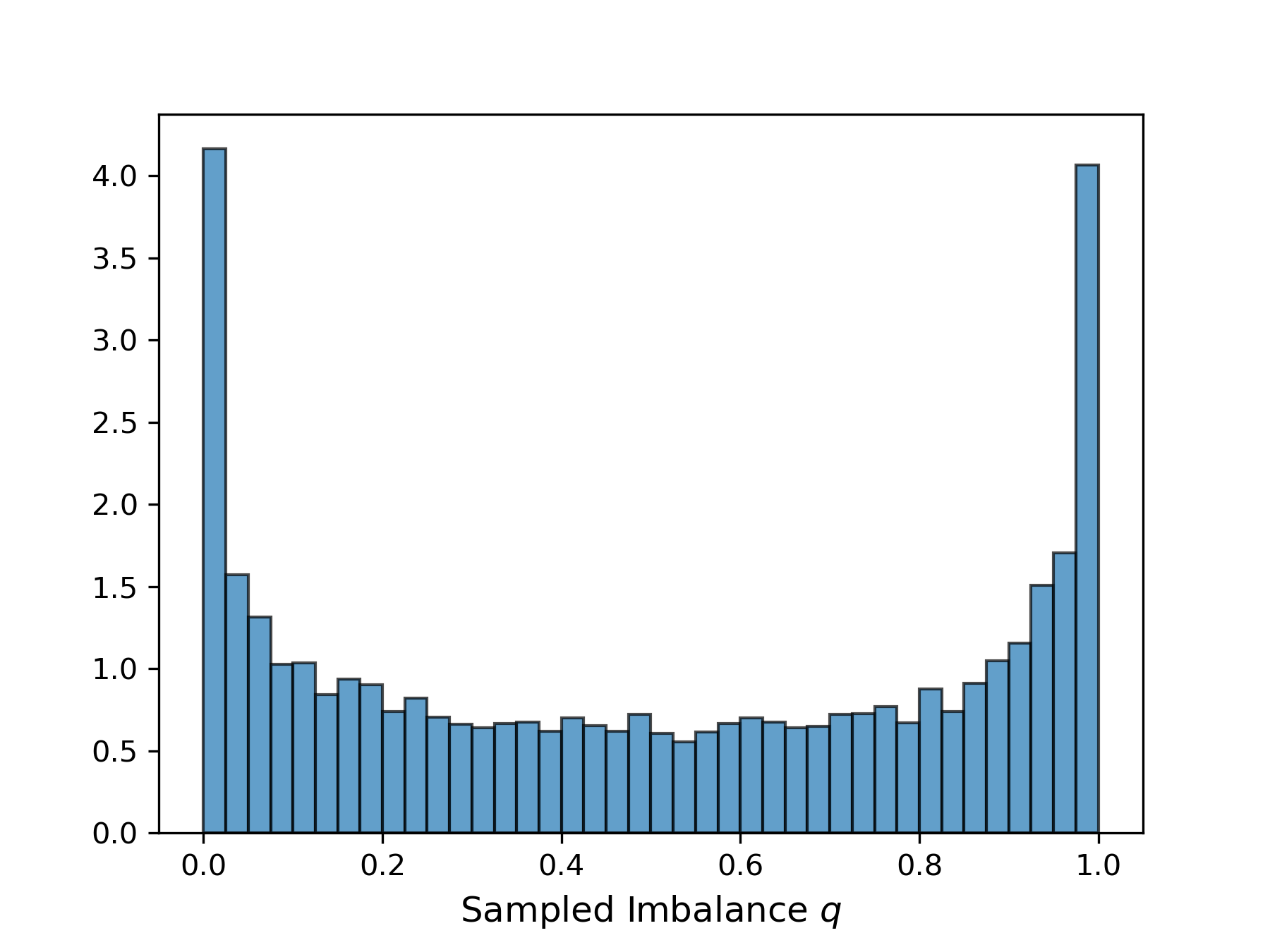}
    \caption{Histogram of imbalance sampled from a $\textit{Beta}(0.5,0.5)$ distribution.}
    \label{fig:hist_imb_sampled_beta0_5_0_5}
\end{figure}

In this example, we model the imbalance using a $U$-shaped distribution, as illustrated in Figure~\ref{fig:hist_imb_sampled_beta0_5_0_5}. Such modeling may be justified by observations in, for example, \cite{Lehalle_2017}, where the order book imbalance---particularly just before a trade---often appears to follow a $U$-shaped distribution.

Figure~\ref{fig:price_sim_const_spread_beta5} shows a representative price simulation for $\beta = 5$. We compare a standard Bachelier model without drift, i.e., $dP_t = \sigma dW_t$ with the price generated from Equation~\eqref{eq:price_dynamics}, i.e., $dP_t = \tilde{\sigma} ( \tanh{(\beta \theta_t)} dt + \frac{dW_t}{\cosh{(\beta \theta_t)}} )$,
where $P_{0} = 10$, $q_t \sim \textit{Beta}\left(0.5,0.5\right)$ and $\sigma = 0.25$, $\tilde{\sigma} = 2\sigma=0.5$.
In each simulation, the Wiener process increments $dW_t$ are shared between both models.
The simulation is run with 8,000 time steps over a time horizon of $T=1$ (i.e., $\Delta t = \frac{1}{8000}$), corresponding to approximately 1 month of trading at 1-minute intervals.
In this example, the simulated price changes generated by Equation~\eqref{eq:price_dynamics} yield a kurtosis of 4.67. In contrast, the standard Bachelier model produces price changes with a kurtosis of approximately 0.16. To obtain comparable means and standard deviations of price changes, we scale $\sigma$ by a factor of two, i.e., $\tilde{\sigma} = 2\sigma$.
Specifically, the average standard deviations of price changes across 1,000 simulations are nearly identical: 0.0028 for the Bachelier model and 0.0029 for the model given by Equation~\eqref{eq:price_dynamics}.

\Cref{tab:kurtosis_1000_summary} presents the kurtosis statistics of price changes across 1000 simulation runs for various imbalance distributions. The results indicate that both significantly skewed imbalances and $U$-shaped distributions contribute to increased kurtosis in the price returns.

\begin{table}[htb]
\centering
\begin{tabular}{l ccc}
\hline
Kurtosis & $\textit{Beta}(0.5,0.5)$ & $\textit{Beta}(2,2)$ & $\textit{Beta}(8,2)$ \\
 & $\beta=5$ & $\beta=7.5$ & $\beta=7.5$ \\
\hline
Mean      & 3.83 & 2.5 & 8.75 \\
Std       & 0.29 & 0.19 & 0.8 \\
Min       & 2.93 & 1.9 & 6.64 \\
Max       & 4.85 & 3.22 & 11.98 \\
\hline
\\
\end{tabular}
\caption{Summary statistics of kurtosis values from 1,000 simulations where imbalance is sampled from $\textit{Beta}$ distributions with different parameters.}
\label{tab:kurtosis_1000_summary}
\end{table}

The resulting price changes generated by Equation~\eqref{eq:price_dynamics} exhibit heavy tails, likely driven by the combination of a $U$-shaped $\textit{Beta}$ imbalance distribution and the high value of the parameter $\beta$.


\begin{figure}[htbp]
    \centering
    \subfloat[\centering Price paths\label{fig:price_sim_const_spread_beta5_a}]{{\includegraphics[scale=0.42]{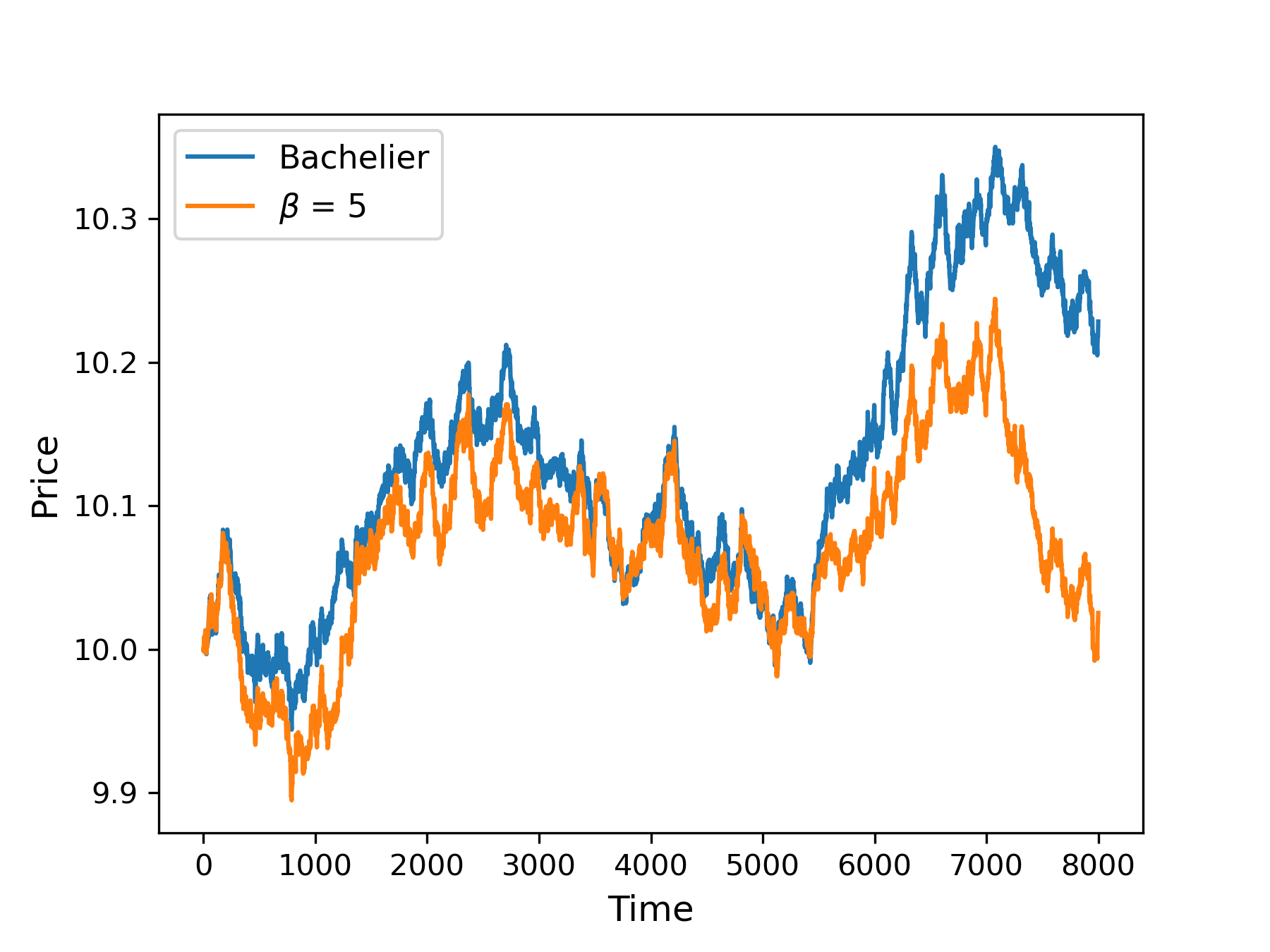} }}
    \qquad
    \subfloat[\centering Histogram of price changes\label{fig:price_sim_const_spread_beta5_b}]{{\includegraphics[scale=0.42]{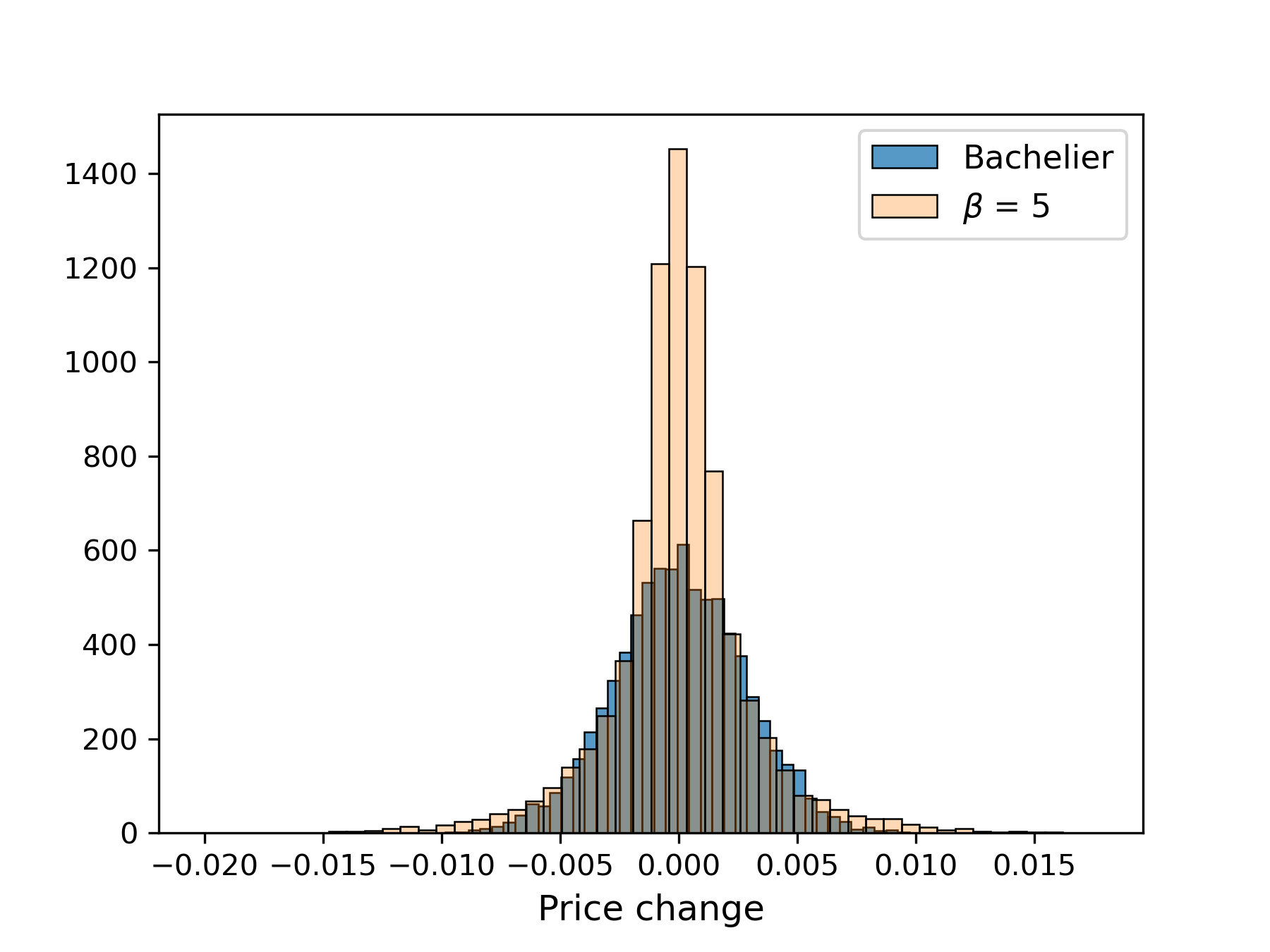} }}
    \caption{Price simulation: We compare the Bachelier model without drift, given by, $dP_t = \sigma dW_t$ to the price generated by Equation~\eqref{eq:price_dynamics}
    , i.e., $dP_t = \tilde{\sigma} ( \tanh{(\beta \theta_t)} dt + \frac{dW_t}{\cosh{(\beta \theta_t)}} )$, where the initial price is $P_{0} = 10$, $\beta = 5$ the imbalance $q_t \sim \textit{Beta}\left( 0.5, 0.5 \right)$, and $\sigma = 0.25$, $\tilde{\sigma} = 0.5$.
    Wiener process increments $dW_t$ are shared between both models.
    In this example, the simulated price changes generated by Equation~\eqref{eq:price_dynamics} exhibit a kurtosis of 4.67. In contrast, the standard Bachelier model produces price changes with a kurtosis of approximately 0.16.}
    \label{fig:price_sim_const_spread_beta5}
\end{figure}

\subsection{Market Impact Simulation} \label{ssec:market_impact_sim}

A common approach in market microstructure is to model the mid-price process as a random walk incorporating a permanent market impact component, as described in \cite{Gueant_2016}. The dynamics can be expressed as:

\begin{equation}
    dP_t^{\textit{mid}} = \sigma dW_t + kv_t dt,
\end{equation}

where $v_t$ is the trading velocity, $\sigma > 0$ represents the volatility of the stock and $k \geq 0$ is the magnitude of the permanent market impact. When $k=0$, indicating no permanent market impact, $P_t^{\textit{mid}}$ follows a purely random walk. In Figure~\ref{fig:price_sim_drift} we can observe that the drift term can arise from bid and ask volume imbalances.

\begin{figure}[!htbp]
    \centering
    \includegraphics[scale=0.4]{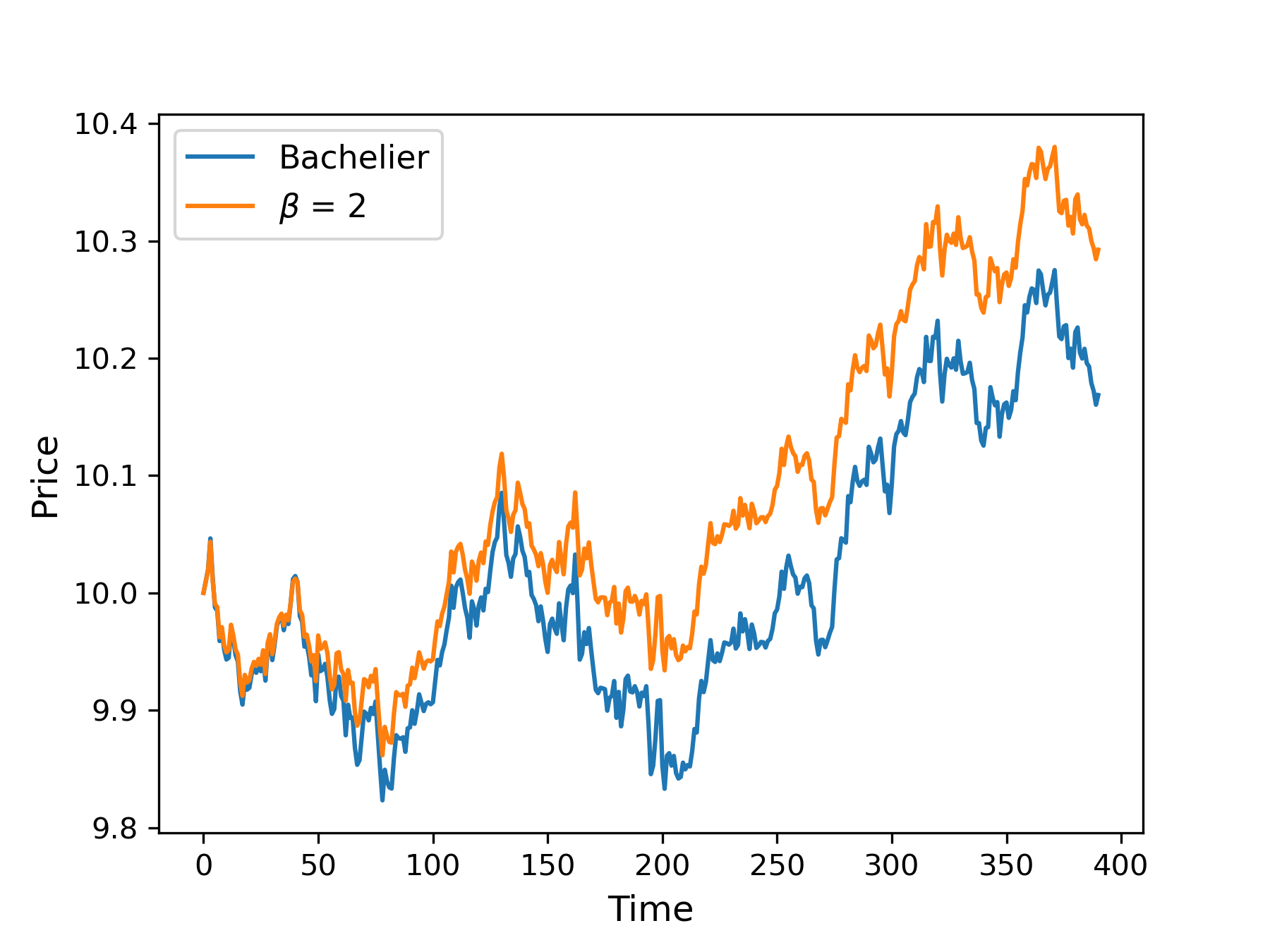}
    \caption{Price simulation: We compare the Bachelier model with drift, given by, $dP_t = \mu dt + \sigma dW_t$ to the price generated by Equation~\eqref{eq:price_dynamics} with $\beta=2$,
    where $P_{0} = 10$, $\mu = 0.1$, the imbalance $q_t \sim \textit{Beta}\left( 6.733, 3.267 \right)$ and $\sigma = 0.3$. In each simulation, the Wiener process increments $dW_t$ are shared between both models. The time horizon is set to 1, with a simulation time step of$\frac{1}{N}$, where $N=390$, corresponding to 6.5 trading hours at 1-minute intervals.
    The plot indicates that in our model the drift term can arise from volume imbalance.}
    \label{fig:price_sim_drift}
\end{figure}

One can easily estimate parameters of a Beta distribution that yield approximately the same drift $\mu$ as in the Bachelier model with drift, i.e., $dP_t = \mu dt + \sigma dW_t$. Because each time step $\theta_{i}$ is sampled from the same distribution, we have:
\begin{equation}
    \sigma \mathbb{E}\tanh{\left( \beta \theta_{i} \right)} = \mu.
\end{equation}
The mean of a $\textit{Beta}(a,b)$ distribution is given by $\frac{a}{a+b}$; hence, we obtain
\begin{equation}
    \hat{\theta}_{i} = \frac{1}{\beta}\tanh^{-1}{\left( \frac{\mu}{\sigma} \right)}.
\end{equation}
For example, let $\beta = 2$, $\mu =0.1$, and $\sigma = 0.3$. This yields $\hat{\theta}_{i} = \frac{1}{2} \tanh^{-1}{\left( \frac{1}{3} \right)} \approx 0.173$, for which a $\textit{Beta}(6.733,3.267)$ distribution serves as a good approximation.

In this example, to simulate the drift $\mu$ in the model given by Equation~\eqref{eq:price_dynamics}, we set $\beta=2$, $P_{0} = 10$, $\mu = 0.1$, $\sigma = 0.3$, and the imbalance $q_t \sim \textit{Beta}\left( 6.733, 3.267 \right)$. In each simulation, the Wiener process increments $dW_t$ are shared between both models. The time horizon is set to 1, with a simulation time step of $\frac{1}{N}$, where $N=390$, corresponding to 6.5 trading hours at 1-minute intervals. \Cref{tab:market_impact_price_statistics} presents summary statistics averaged across 1,000 simulation runs. We observe average price changes of approximately 0.1 that are similar for both models.

\begin{table}[htb]
\centering
\begin{tabular}{l cc}
\hline
Statistic & Bachelier & $\textit{Beta}\left( 6.733,3.267 \right)$ \\
\hline
Mean      & 10.1078 & 10.1006 \\
Std       & 0.3 & 0.28 \\
Min       & 9.17 & 9.24 \\
Max       & 11.05 & 10.93 \\
\hline
\\
\end{tabular}
\caption{Summary statistics for the price value after 8,000 times steps, averaged across 1,000 simulations. The mean is reported to four decimal places, while all other statistics are rounded to two decimal places.
}
\label{tab:market_impact_price_statistics}
\end{table}


\subsection{Simulations with Symmetric Imbalance Concentrated at 0.5}\label{ssec:beta_and_symm_imb_sim}

In this example, we compare the behavior of the Bachelier model without drift, given by, $dP_t = \sigma dW_t$, and the price process generated by Equation~\eqref{eq:price_dynamics}, i.e., $dP_t = \tilde{\sigma} ( \tanh{(\beta \theta_t)} dt + \frac{dW_t}{\cosh{(\beta \theta_t)}} )$. The initial price is set to $P_{0} = 10$, and the volatility is defined as $\tilde{\sigma} = \eta \sigma$, with $\sigma = 0.1$. The imbalance is modeled as $q_t \sim \textit{Beta}\left( 1.5,1.5 \right)$. In each simulation, the Wiener process increments $dW_t$ are shared between both models. For $\beta=5$, we apply a scaling factor $\eta=1.45$ to achieve a comparable standard deviation of price changes---approximately 0.005 for both models---based on 1,000 simulations. For $\beta = 1$, we set $\eta = 1$. The time horizon is set to 1, with a simulation time step of$\frac{1}{N}$, where $N=390$, corresponding to 6.5 trading hours at 1-minute intervals.

\begin{figure}[!htbp]
    \centering
    \subfloat[\centering $\beta = 1$, $\eta=1$]{{\includegraphics[scale=0.4]{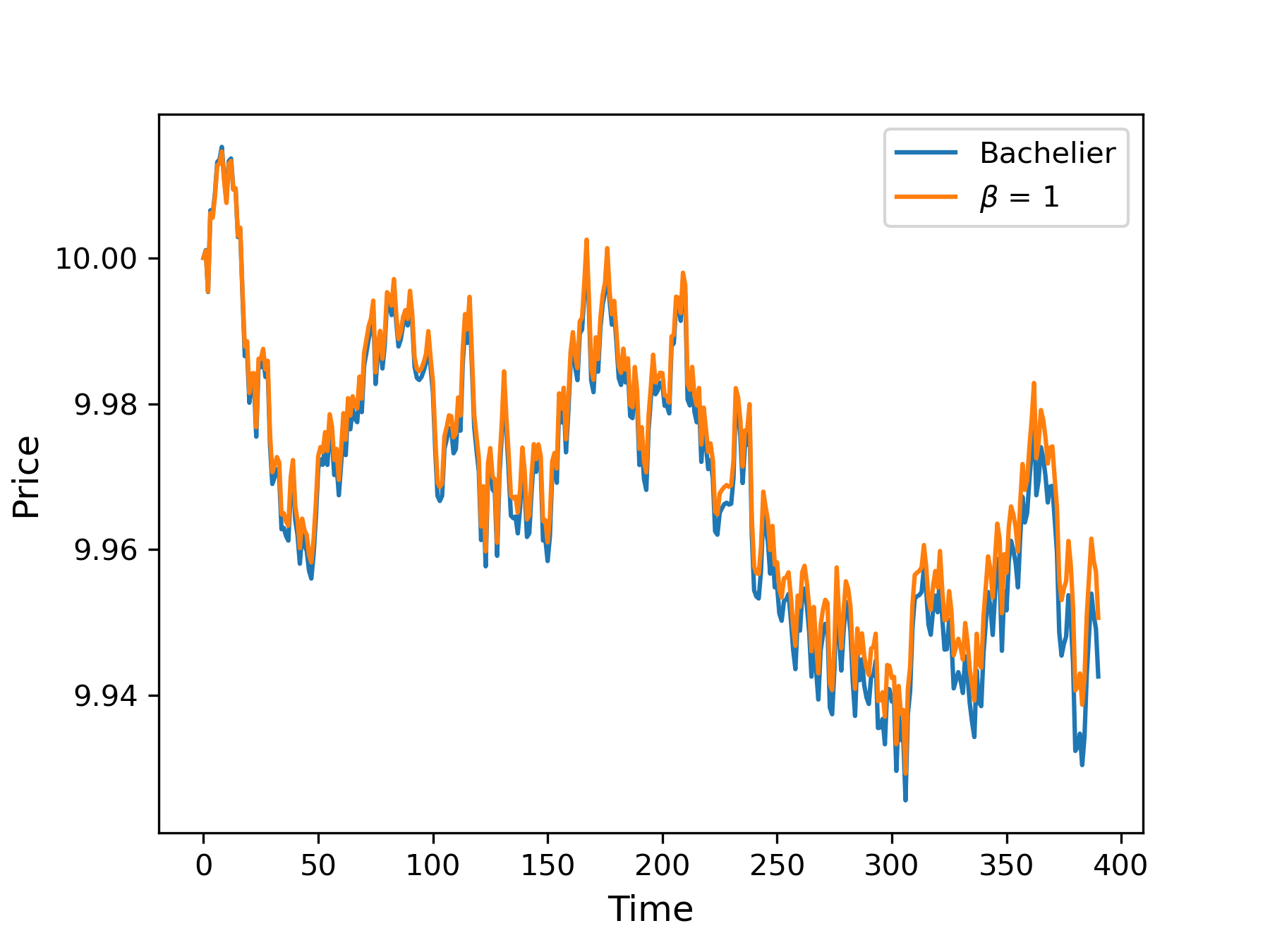} }}
    \qquad
    \subfloat[\centering $\beta = 5$, $\eta=1.45$]{{\includegraphics[scale=0.4]{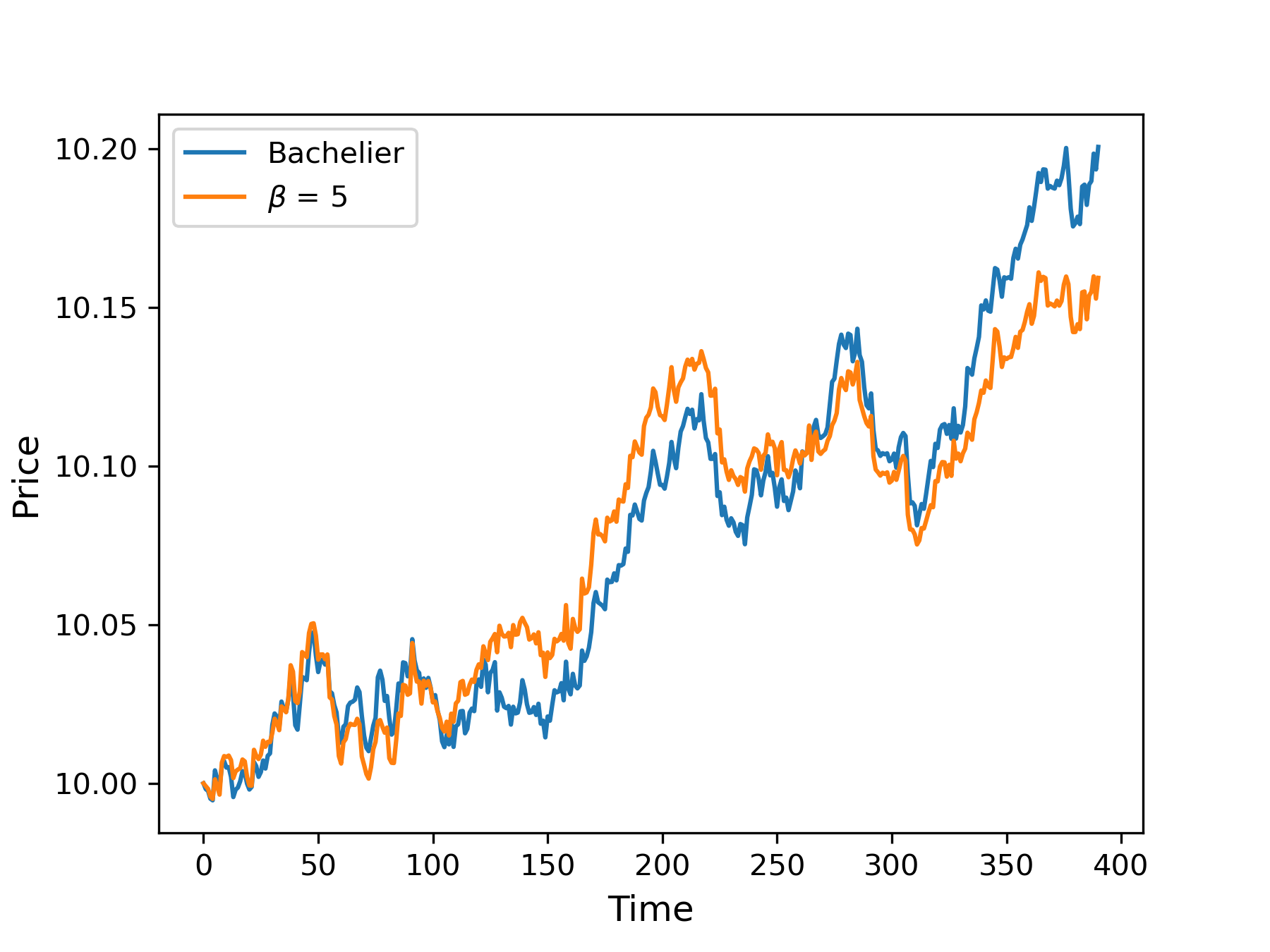} }}
    \caption{Price simulation: We compare the Bachelier model without drift, given by, $dP_t = \sigma dW_t$ to the price generated by Equation~\eqref{eq:price_dynamics}, i.e., $dP_t =  \tilde{\sigma} ( \tanh{(\beta \theta_t)} dt + \frac{dW_t}{\cosh{(\beta \theta_t)}} )$, where the initial price is $P_{0} = 10$, $\tilde{\sigma} = \eta \sigma$ the imbalance $q_t \sim \textit{Beta}\left( 1.5,1.5 \right)$, and $\sigma = 0.1$. In each simulation, the Wiener process increments $dW_t$ are shared between both models. For $\beta=5$, we apply a scaling factor $\eta=1.45$ to achieve a comparable standard deviation of price changes---approximately 0.005---based on 1,000 simulations. The time horizon is set to 1, with a simulation time step of$\frac{1}{N}$, where $N=390$, corresponding to 6.5 trading hours at 1-minute intervals.}
    \label{fig:price_simulation_beta1_5}
\end{figure}


As illustrated in Figure~\ref{fig:price_simulation_beta1_5}, we observe that when the sampled imbalance is symmetric and particularly concentrated around the center at $\frac{1}{2}$, both the Bachelier model without drift and the price process described by Equation~\eqref{eq:price_dynamics} are closely aligned for $\beta = 1$. However, as $\beta$ increases, the impact of imbalance becomes more pronounced and both curves exhibit greater variability, as seen for $\beta = 5$.

\section{Historical data}


In this section, we examine how well the proposed Equations~\eqref{eq:price_dynamics} and \eqref{eq:price_dynamics_varying_spread} aligns with historical data and analyze whether heavy-tailed distributions can also be modeled using these equations.

For historical data analysis, we used data obtained from \textit{Refinitiv Eikon}. We consider two types of stocks: one listed on the \textit{NYSE} with a varying spread, and another listed on the \textit{NASDAQ} with an effectively constant spread.

The trading hours were from 9:30 to 16:00, and we aggregated the data into 1-minute intervals using the last available quote within each bin. The dataset spans one month of trading in May 2025. We adopt a similar approach to \cite{HAGSTROMER2021} by excluding the beginning and end of each trading day, removing the first and last four minutes to reduce opening and closing effects.
We then compute mid-price changes and exclude the first available data point (9:35) each day to avoid overnight changes.\footnote{We exclude from the raw data any quotes for which $P^{b}=P^{a}$.}

The final sample consists of approximately 8,000 observations across 21 trading days.





\subsection{Stock Example with Varying Spread}\label{ssec:GE_N}

In this section, we analyze data from General Electic (GE) listed on the NYSE. To model price changes, we apply the dynamics described in Equation~\eqref{eq:price_dynamics_varying_spread}, incorporating a varying spread. Imbalance is modeled using a Beta distribution, and the spread is modeled using a Gamma distribution, following the approach in \cite{Lo_2002}. Figures~\ref{fig:hist_imb_vs_sampled_imb_GE_N} and~\ref{fig:hist_spread_vs_sampled_spread_GE_N} in the Appendix present a comparison between the sampled and historical data.\footnote{\label{sampled_vs_historical}It is worth noting that similar conclusions can naturally be drawn when using historical data in place of sampled spread and/or imbalance.}

The dataset spans 21 trading days and contains approximately 8,000 observations. To model the price dynamics, we set $\beta = 1$ and generate the sampled price differences as follows:

\begin{equation}
S_{i} \tanh{(\beta\theta_i)} + \eta S_{i} \frac{\epsilon_i}{\cosh{(\beta \theta_i)}}    
\end{equation}

where $\epsilon_i$ are i.i.d. random variables drawn from $\mathcal{N}(0,1)$, and $\eta=0.75$. Usually, it is necessary to scale either the drift term $dt$, the volatility term $dW_t$, or both. In our case, scaling the second term yielded a mean and standard deviation similar to those reported in \Cref{tab:mean_and_std_GE_N}.\footnote{We leave the estimation of the parameters in Equation~\eqref{eq:price_dynamics} for future research.}

\begin{table}[!htbp]
    \begin{minipage}{.48\linewidth}
      \centering
        \begin{tabular}{l c}
\hline
Statistic & Kurtosis \\
\hline
Mean      & 2.51 \\
Std       & 0.41 \\
Min       & 1.56 \\
Max       & 8.25 \\
\hline
\\
\end{tabular}
        \caption{Summary statistics of kurtosis values from
                $1,000$ runs with $\beta=1$ using sampled data.}
                \label{tab:kurtosis_GE_N_1000_summary}
    \end{minipage}%
    \hspace{0.04\linewidth}
    \begin{minipage}{.48\linewidth}
      \centering
        \begin{tabular}{l cc}
        \\
        \\
\hline
Statistic & $\Delta P^{\textit{mid}}$ & $\Delta P$\\
\hline
Mean      & 0.0046 & 0.005 \\
Std       & 0.131 & 0.1385 \\
\hline
\\
\end{tabular}
        \caption{Comparison of mean and standard deviation:
        mid-price changes vs. sampled price changes.
        }                
        \label{tab:mean_and_std_GE_N}
    \end{minipage}
\end{table}

\begin{figure}[!htbp]
    \centering
    \includegraphics[scale=0.5]{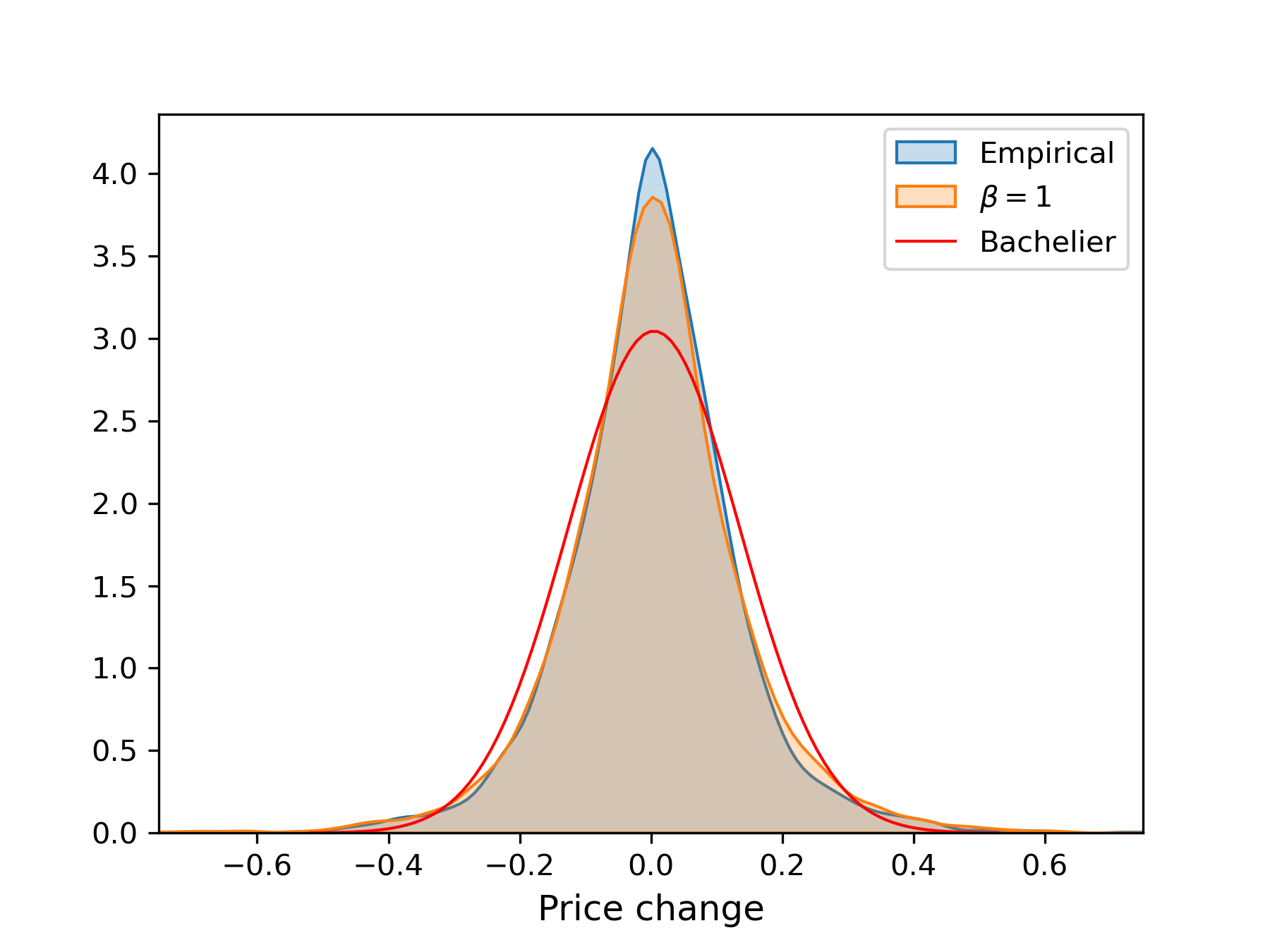}
    \caption{Kernel Density Estimation of mid-price changes and sampled price changes.
    The sampled price changes exhibit a kurtosis of 5.17, compared to 4.15 in the historical data.
    The theoretical density for the Bachelier model is also shown for comparison.}
    \label{fig:kde_density_GE_N_sampled}
\end{figure}

Using Kernel Density Estimation (KDE), we compared the distributions of mid-price changes and sampled price differences. As shown in Figure~\ref{fig:kde_density_GE_N_sampled}, the two distributions are closely aligned. More importantly, the kurtosis in the sampled data is consistent with that of the historical data. In the sampled data, kurtosis may vary due to differences in the sampled spread, imbalance and the normal random component. The sampled price changes displayed in Figure~\ref{fig:kde_density_GE_N_sampled} exhibit a kurtosis of 5.17, compared to 4.15 in the historical data. Refer to \Cref{tab:kurtosis_GE_N_1000_summary} for the corresponding statistics.



\subsection{Stock Example with Constant Spread}\label{ssec:LCID_O}

Small-cap stocks typically exhibit larger bid-ask spreads than large-cap stocks; see, e.g., \cite{Chordia_2001}. Moreover, the return distributions of small-cap stocks have been shown to have significantly fatter tails than those of large-cap stocks \cite{Amihud_2002, Eom_2019}. This pattern is also observed in our examples.\footnote{GE has a much higher market cap than LCID. However, the average mid-price in our sample for LCID is approximately \$2.57, compared to about \$225.25 for GE. LCID is a very liquid stock with a high trading volume. This combination of high liquidity and low price results in an effectively constant spread.} More importantly, modeling price changes with heavier tails presents additional challenges.

To analyze a stock with a stable and nearly constant spread, we use Lucid Group (LCID), which is listed on NASDAQ. As before, we model price changes using the dynamics described in Equation~\eqref{eq:price_dynamics}, assuming a constant spread. For LCID in our 1-minute interval data, the spread is consistently 0.01. The imbalance is again modeled using a Beta distribution.
Figure~\ref{fig:hist_imb_vs_sampled_imb_LCID_O} in the Appendix presents two histograms comparing the sampled and historical imbalance data.\cref{sampled_vs_historical}

As shown in Figure~\ref{fig:hist_weighted_vs_boltzmann_change_LCID_O}, the mid-price typically remains unchanged or changes by $\pm 0.01$. We compare the Boltzmann price changes with parameters $\beta =0.5$, $1$, and $2$ to the mid-price changes. The Boltzmann price changes with $\beta = 0.5$ closely resemble the mid-price changes, while the weighted mid-price changes appear noticeably noisier. In contrast, the Boltzmann price changes with $\beta = 1$ offer more informative signals than the mid-price changes, while remaining less noisy than the weighted mid-price changes. All price change distributions exhibit elevated kurtosis.
We computed a kurtosis of 6.99 for the mid-price changes, approximately 7.54 for the Boltzmann price changes with $\beta = 0.5$, 8.08 for the Boltzmann price with $\beta=1$, and 9.1 for the weighted-mid price changes.

\begin{figure}[htb]
    \centering
    \subfloat[\centering The mid-price vs the weighted-mid price changes]{{\includegraphics[scale=0.4]{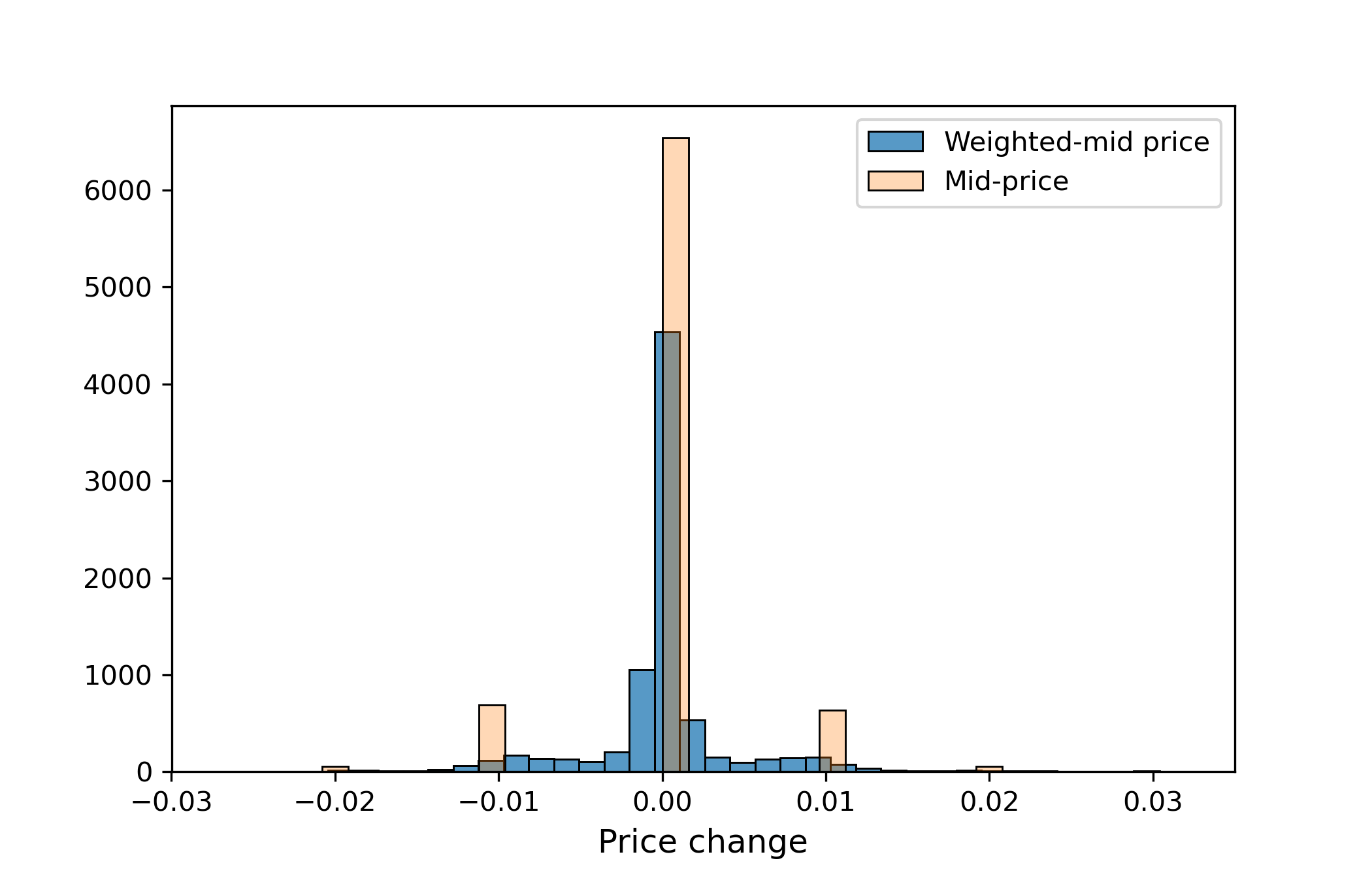} }}
    \qquad
    \subfloat[\centering The mid-price vs Boltzmann price ($\beta=0.5$) changes]{{\includegraphics[scale=0.4]{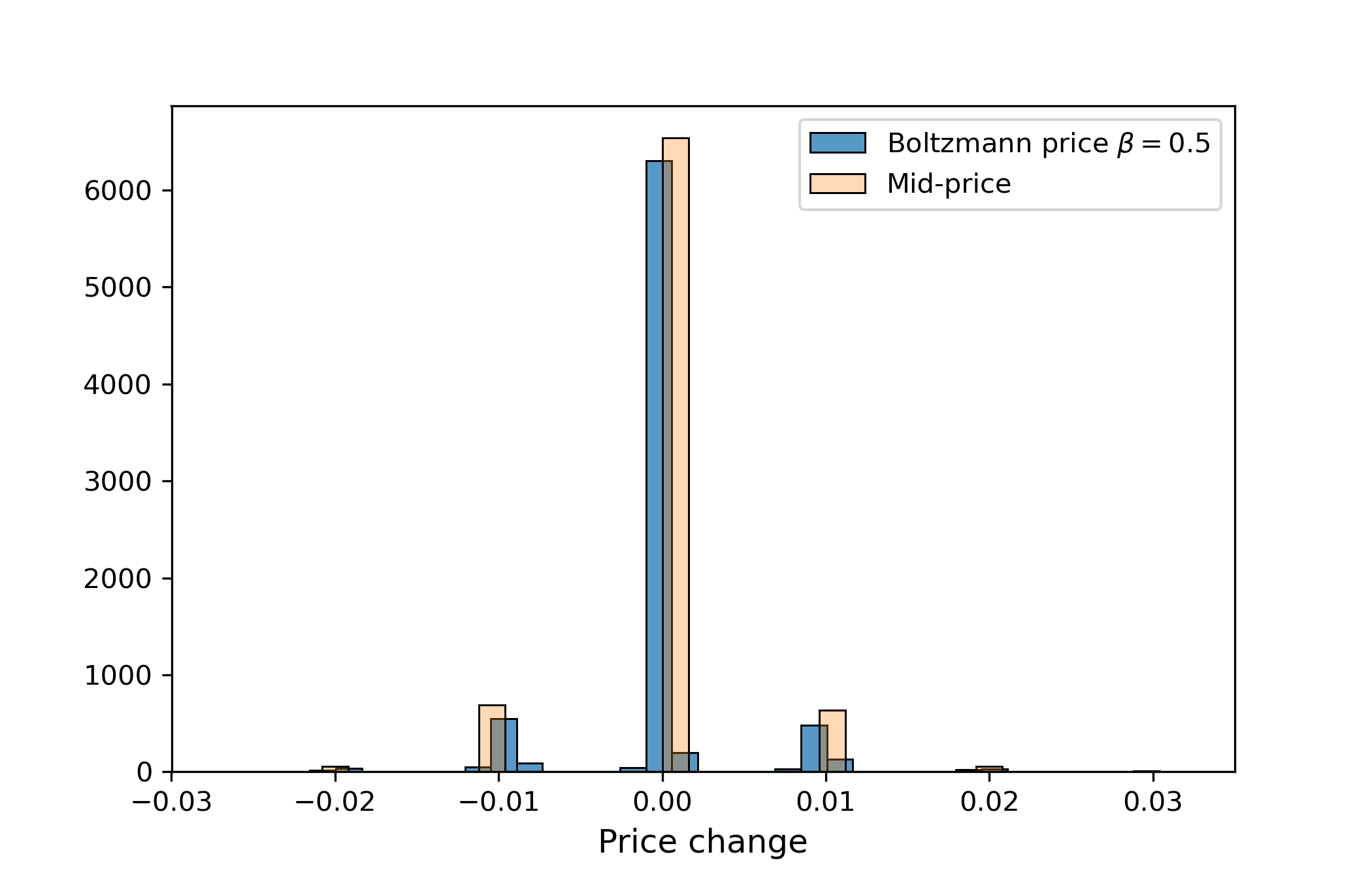} }}
    \qquad
    \subfloat[\centering The mid-price vs equilibrium price changes]{{\includegraphics[scale=0.4]{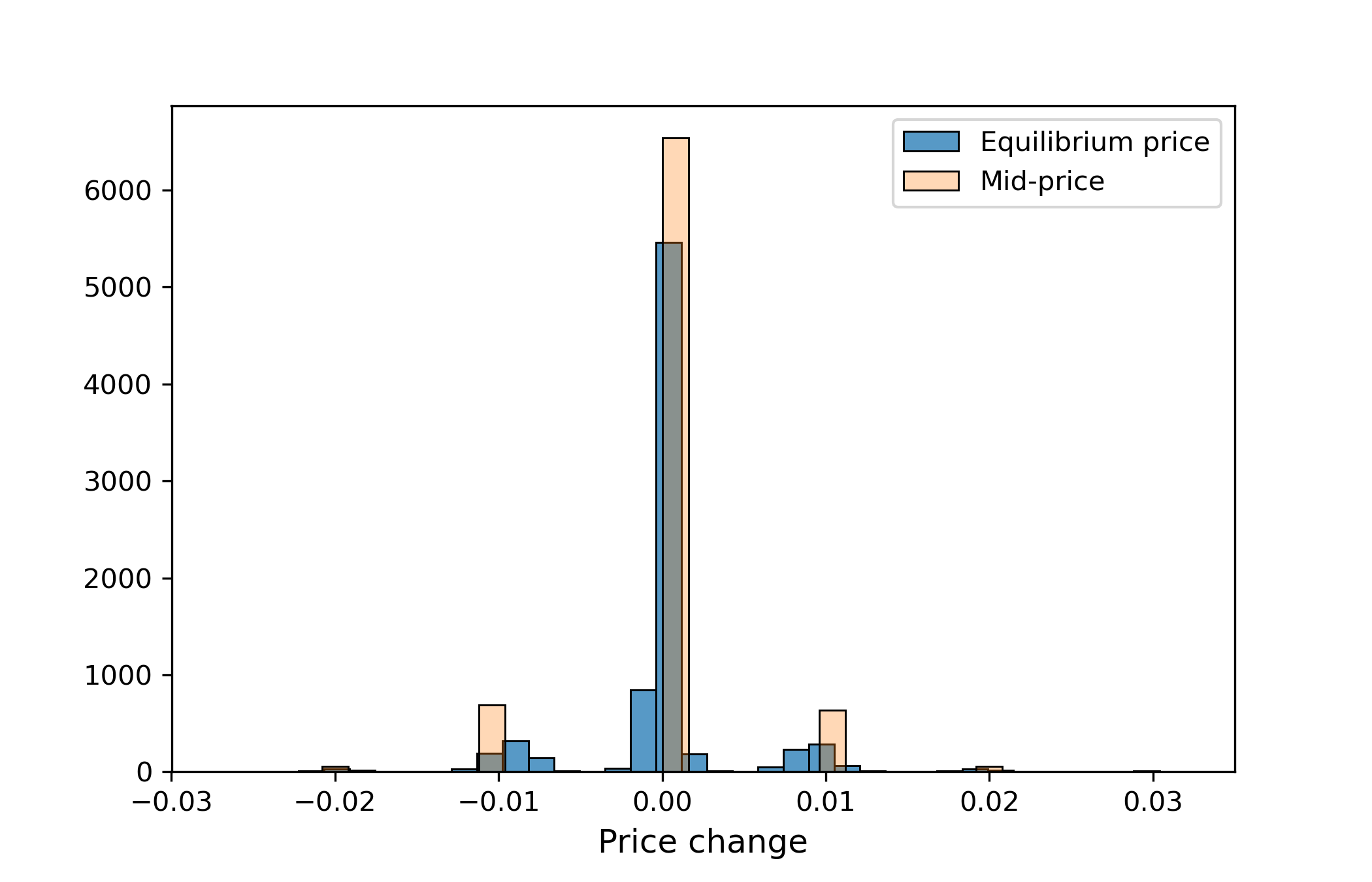} }}
    \caption{Histograms for LCID, based on data aggregated into 1-minute intervals. The Boltzmann price changes for $\beta =0.5$, $1$, and $2$ are compared with the mid-price changes.
    }
    \label{fig:hist_weighted_vs_boltzmann_change_LCID_O}
\end{figure}



The dataset spans 21 trading days and contains approximately 8,000 observations.
Following the example in Subsection~\ref{ssec:sim_const_spread}, we increase the impact of imbalance by setting $\beta = 17$. The sampled price differences are generated as follows:

\begin{equation}\label{const_spread_sampling}
\eta \tilde{\mu} \tanh{(\beta\theta_i)} + \eta \tilde{\sigma} \frac{\epsilon_i}{\cosh{(\beta \theta_i)}},
\end{equation}

where $\epsilon_i$ are i.i.d. random variables drawn from $\mathcal{N}(0,1)$, and $\tilde{\mu}$ and $\tilde{\sigma}$ represent the mean and standard deviation of mid-price changes, respectively. We observed that using $\beta = 17$ and $\eta = 2$ yields satisfactory results; however, it is important to note that more accurate estimates may be obtained. In \Cref{tab:kurtosis_LCID_O_1000_summary}, we compare the statistics from two runs with different parameter settings. Each run consists of 1,000 samples, with each sample comprising price changes generated according to formula~\ref{const_spread_sampling}. The results are quite close, but ultimately, choosing a higher $\beta$ appears to fit the data slightly better.

Generally, we observe that increasing $\beta$ increases kurtosis but, on the other hand, decreases the variance of the sampled price changes. The parameters $\beta$ and $\eta$ were selected to produce sampled price changes with means and standard deviations similar to those in the historical data, while also achieving comparable kurtosis. Specifically, the average standard deviation across 1,000 samples is nearly identical to that of the historical price changes: 0.00477 for the sampled series and 0.00487 for the historical series, respectively.

For visualization, we also present the distribution and summary statistics of a single sample whose kurtosis closely matches that of the historical mid-price changes. As shown in \Cref{tab:mean_and_std_LCID_O}, the mean and standard deviation of the corresponding price changes are comparable.

\begin{table}[htb]
    \begin{minipage}{.5\linewidth}
      \centering
        \begin{tabular}{l cc}
\hline
Statistic & $\beta=8.5$ & $\beta=17$\\
 & $\eta=1$ & $\eta=2$\\
\hline
Mean      & 5.5 & 5.62\\
Std       & 0.29 & 0.38\\
Min       & 4.75 & 4.29\\
Max       & 6.61 & 7.56\\
\hline
\\
\end{tabular}
        \caption{Summary statistics of kurtosis values\\
        from 1,000 samples.}
                \label{tab:kurtosis_LCID_O_1000_summary}
    \end{minipage}%
    \begin{minipage}{.5\linewidth}
      \centering
        \begin{tabular}{l ccc}
        \\
        \\
\hline
Statistic & $\Delta P^{\textit{mid}}$ & $\Delta P$ & Rounded $\Delta P$\\
\hline
Mean      & \num{-3.63e-5} & \num{-4.54e-5} & \num{-2.88e-5} \\
Std       & 0.0049 & 0.0048 & 0.005\\
\hline
\\
\end{tabular}
        \caption{Comparison of mean and standard deviation: mid-price changes vs. sampled and rounded sampled price changes for a representative sample.}
        \label{tab:mean_and_std_LCID_O}
    \end{minipage}
\end{table}

To better match the real data distribution we round the sampled data to two decimal places.\footnote{There is extensive research on price rounding and its effects; see, e.g. \cite{Ball_2001}.} The kurtosis of the sampled price changes is 6.74, compared to 6.99 in the historical data. After rounding the sampled price changes to two decimal places, the kurtosis increases marginally to 6.84.
In practice, kurtosis may vary due to randomness of the sampled imbalance and the normal component.


In Figure~\ref{fig:hist_LCID_O_sampled_beta_17}, we compare the histograms of the sampled price changes and historical mid-price changes , which shows a close match between the two distributions. Importantly, the kurtosis of the sampled data is consistent with that of the real data. Refer to \Cref{tab:kurtosis_LCID_O_1000_summary} for detailed statistics.

\begin{figure}[htbp]
    \centering
    \subfloat[\centering Histogram of sampled price changes (kurtosis 6.74).]{{\includegraphics[scale=0.42]{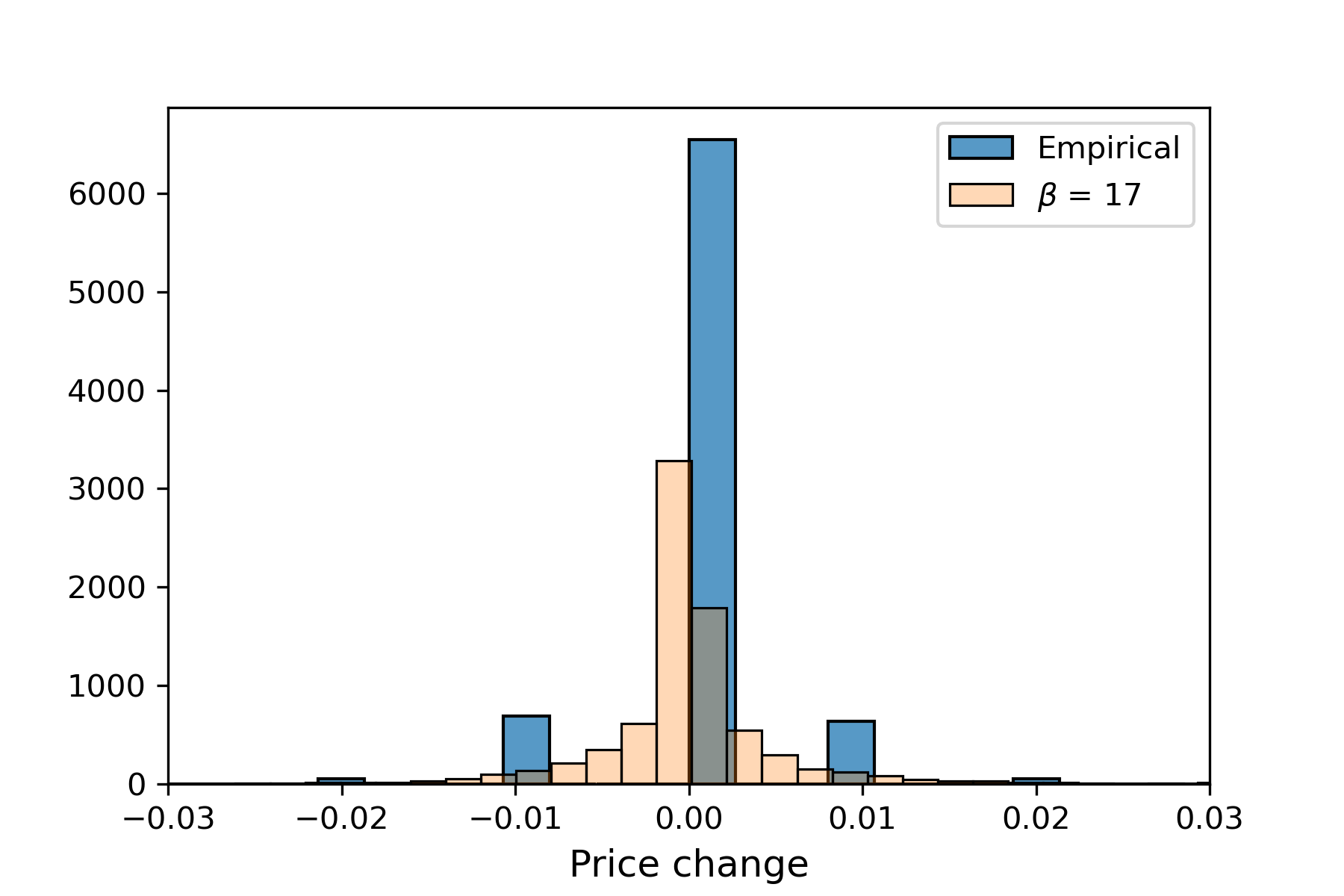} }}%
    \qquad
    \subfloat[\centering Histogram of rounded sampled price changes (kurtosis 6.84).]{{\includegraphics[scale=0.42]{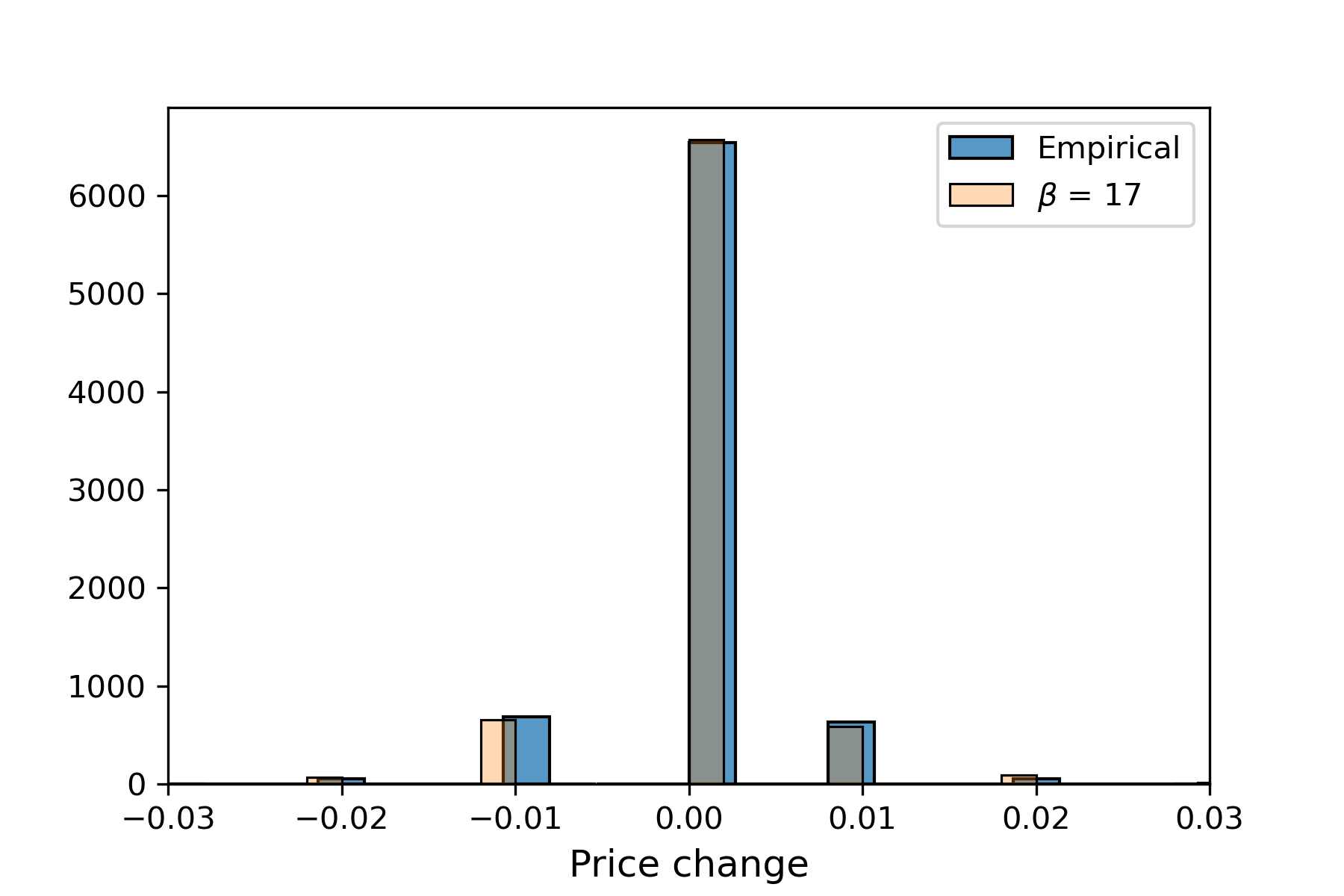} }}%
    \caption{Comparison of mid-price changes (historical) vs. sampled price changes, before and after rounding.
    In the sampled data, kurtosis may vary due to randomness in the sampled imbalance and the normally distributed noise component. The kurtosis of the sampled price changes is 6.74, compared to 6.99 in the historical data. After rounding the sampled price changes to two decimal places, the kurtosis increases marginally to 6.84. Sampling we conducted using $\beta = 17$.}
    \label{fig:hist_LCID_O_sampled_beta_17}
\end{figure}

\section{Market Impact}


The theoretical model of market impact was proposed by Kyle \cite{Kyle_1985} where the model assumes three kinds of traders, i.e., a single insider (informed trader), uninformed noise traders, and market makers. Especially market makers use information about the quantities traded by other market participants. In Kyle's model the optimal strategy of informed trader is to split the metaorder into smaller pieces and execute them incrementally over time, preventing reveiling the information to early.
The price is adjusted linearly by market makers based on the information of the order flow, defined as the current and past aggregate quantities traded by the insider and noise traders. The rate of this linear growth is called Kyle's $\lambda$.

The study of fundamental price is crucial for market impact analysis, where it is typically modeled relative to the mid-price. In the literature, the first generation of market impact models \cite{Bertsimas_1998, Almgren_1999, Almgren_2001, Almgren_2003} commonly distinguishes between permanent and temporary impacts. In particular, there is no resilience in the model of permanent market impact. It was shown that assuming no dynamic arbitrage, only a permanent impact which is a linear function of volume is allowed \cite{Huberman_2004,Gatheral_2010}.

According to Almgren and Chriss \cite{Almgren_2001}, temporary market impact arises from temporary imbalances in supply and demand caused by trading, leading to short-lived price movements away from equilibrium. Following \cite{Almgren_2001,Cartea_2015}, the execution price $\hat{P}_t$ at which the agent can sell or purchase the asset at time $t$ is given by:
\begin{equation}
    \hat{P}_t = P_t^{\textit{mid}} \pm \left( \frac{\textit{S}}{2} + h(\nu_t) \right),
\end{equation}
where $S$ is the bid-ask spread (assumed constant), and $h(\nu_t)$ represents the temporary market impact, with $h(\nu_t) \geq 0$. The sign $\pm$ is positive $(+)$ for buy orders and negative $(-)$ for sell orders, reflecting the direction of the trade. Almgren and Chriss \cite{Almgren_2001} propose a linear form of temporary market impact, $h(\nu_t) = \eta \nu_t$, where $\eta$ is a constant parameter.\footnote{In \cite{Almgren_2001}, the temporary market impact includes the spread term $\frac{\textit{S}}{2}$, but we follow the notation of \cite{Cartea_2015} for consistency.} However, empirical evidence suggests that $h(\nu_t)$ is better described by a power-law function (see, e.g., \cite{Cartea_2015}).

Half the spread, $\frac{\textit{S}}{2}$, represents the transaction cost of crossing the spread to execute a market order. At any time $t$, the number of shares available at the bid and ask prices, $P_t^{\textit{mid}} \pm \frac{S}{2}$, is constrained by the liquidity at Level I of the LOB. When a large order is executed, it consumes liquidity at higher price levels in the LOB, resulting in an execution price $\hat{P}_t$ worse than the quoted bid or ask prices. This additional cost, beyond half the spread, is modeled by the temporary market impact $h(\nu_t)$. This temporary impact affects $\hat{P}_t$ without altering the mid-price process $(P_t^{\textit{mid}})$. It is assumed that this effect is short-lived and liquidity is restored shortly after execution, and a new equilibrium price is established.\cite{Almgren_2001}

The execution cost represented by half the spread, $\frac{S}{2}$, is a fundamental concept in market microstructure literature \cite{Glosten_1985}. It reflects the fixed cost of executing a market order relative to the mid-price. Specifically, a buyer pays the ask price ($P^{\textit{mid}} + \frac{S}{2}$), while a seller receives the bid price ($P^{\textit{mid}} - \frac{S}{2}$).

We should note that more recent studies emphasize the transient nature of market impact of single market orders in the sense that it decays over a certain time interval after trade \cite{Bouchaud_2009,Bacry_2014}.

This section examines the impact of small (infinitesimal) market orders on the fundamental price and presents the general dynamics of the market impact of such orders.

Here, we try to justify the cost given by half the spread $\frac{S}{2}$ by the change in the fundamental price. Moreover, we challenge this value; one might notice that when buy order occurs, it captures the liquidity on ask side.  It should push the fundamental price towards the ask side, changing the cost.

Let us assume that the imbalance $q$ has changed by $\delta$, resulting in the Boltzmann price change. Let $q = \frac{1}{2} + \theta$ The derivative of $P^{\textit{boltzmann}}$ with respect to $\theta$ is equal to:\footnote{Notice that $\left(P^{\textit{boltzmann}}\right)^{\prime}\left( q \right) = \frac{2\beta e^{-\beta} \left( P^a - P^b \right)}{\left( e^{-\beta q} + e^{-\beta (1-q)} \right)^2}$, but using $\theta = q - \frac{1}{2}$ simplifies the calculation.}
\begin{equation}
    \left(P^{\textit{boltzmann}}\right)^{\prime}\left( \theta \right) = \frac{2\beta \left( P^a - P^b \right)}{\left( e^{-\beta \theta} + e^{\beta \theta} \right)^2} = \beta \cdot \frac{\frac{S}{2}}{\cosh^2{\left( \beta \theta \right)}}.
\end{equation}
Hence, the dynamics of the Boltzmann price is given by
\begin{equation} \label{eq:boltzmann_market_impact_dynamics}
    d P^{\textit{boltzmann}} = \beta \cdot \frac{\frac{S}{2}}{\cosh^2{\left( \beta (q-0.5) \right)}} dq.
\end{equation}
This dynamics reminds the results from literature where the impact is proportional to half the spread, compare, e.g., \cite{Madhavan_1997, Wyart_2008}. However, in Equation~\eqref{eq:boltzmann_market_impact_dynamics} the change in fundamental price depends also on bid and ask imbalances providing the impact is slightly less than half the spread.

Using Lemma \ref{prop:price_approx} and assuming $\beta = 1$ we can rewrite the above equation as follows
\begin{equation}
    d P^{\textit{eq}} \approx \frac{S}{2} dq.
\end{equation}
Suppose that the imbalance $\theta$ has changed between times $t_0$ and $t_1$ by $\Delta \theta = \theta_1 - \theta_0$. The price change can be calculated, using the \textit{Fundamental Theorem of Calculus}, as
\begin{equation}
    \Delta P^{\textit{boltzmann}} = P^{\textit{boltzmann}}_{t_1} - P^{\textit{boltzmann}}_{t_0} = \beta \cdot \frac{S}{2} \int_{\theta_0}^{\theta_1} \frac{1}{\cosh^2{\left( \beta \theta \right)}} d\theta = \beta \frac{S}{2} \frac{1}{\beta} \tanh{\left( \beta \theta \right)} \Biggr|_{\theta_0}^{\theta_1}.
\end{equation}
Hence,
\begin{equation}
    \Delta P^{\textit{boltzmann}} = \frac{S}{2} \cdot \left[ \tanh{\beta\theta_1} - \tanh{\beta\theta_0} \right].
\end{equation}
Now imagine that we have a completely balanced TOB, i.e. $q=\frac{1}{2}$ and we completely clear, e.g., the ask side, resulting in $q=1$. Let us calculate the Boltzmann price change. We have:
\begin{equation}
    P^{\textit{boltzmann}}\left( \theta=\frac{1}{2} \right) - P^{\textit{boltzmann}}\left( \theta=0 \right) = \frac{S}{2} \cdot \left[ \tanh{\left( \frac{\beta}{2} \right)} - \tanh{\left( 0 \right)}\right] = \frac{S}{2} \cdot \tanh{\left( \frac{\beta}{2} \right)}
\end{equation}
Let $\beta = 1$, i.e., we consider the equilibrium price $P^{\textit{eq}}$. The price has then changed by $\frac{S}{2} \cdot \tanh{(0.5)}$ where $\tanh{(0.5)} \approx 0.462$. Analogously, we get the same results but with opposite sign when we sell shares. For example, moving from $\theta=0$ towards $\theta = -\frac{1}{2}$, results in a price change by $-\frac{S}{2} \cdot \tanh{\left( \frac{\beta}{2} \right)}$.




Notice that the mid-price $P^{\textit{mid}}$ is fixed and does not depend on the imbalance. Therefore, its derivative is zero, and we cannot describe its dynamics in the same way as above. This naturally raises the question: what about the weighted-mid price---is it a good proxy for the fundamental price? Recall that
\begin{equation}
    P^{\textit{w}} = q^b P^a + q^a P^b = \left( \frac{1}{2} + \theta \right) P^a + \left( \frac{1}{2} - \theta \right) P^b.
\end{equation}
Calculating the derivative, we easily get
\begin{equation}
    \left( P^{\textit{w}} \right)^{\prime} (q) = \left( P^{\textit{w}} \right)^{\prime} (\theta) = P^a - P^b = S. 
\end{equation}
Hence, the dynamics for the weighted-mid price is given by:
\begin{equation}
    d P^{\textit{w}} = S dq.
\end{equation}
If a good measure of the fundamental price is expected to adjust by roughly half the spread, this serves as a strong argument against using the weighted mid-price as a proxy. It can also support the use of alternative measures, such as $\frac{1}{2}\left( P^{\textit{mid}} + P^{\textit{w}} \right) \approx P^{\textit{eq}}$, as more appropriate candidates.\footnote{Notice that the derivative with respect to $q$ is equal to $( \frac{1}{2}( P^{\textit{mid}} + P^{\textit{w}} ) )^{\prime} = \frac{S}{2}$.}

This means that in a situation where the TOB is balanced ($\theta=0$), and we clear (buy) all of the ask side at the given spread ($\theta=\frac{1}{2}$), the resulting price change is given by:
\begin{equation}
    \Delta P^{\textit{w}} = S \cdot \frac{1}{2} - S \cdot 0 = \frac{S}{2}.
\end{equation}
However, in reality, this does not necessarily mean that the next mid-price will move up by half the spread, because the LOB exhibits resilience, and new orders may arrive. There is also the effect of bid-ask bounce. All of this suggests that the impact modeled by the weighted mid-price might be overestimated. In contrast, the temporary market impact modeled by the Boltzmann price appears to better reflect the resilience of the LOB.


\begin{figure}[htb]
    \centering
    \subfloat[\centering The derivative of the Boltzmann price compared to the mid-price adjustment for $\beta = 1$.]{{\includegraphics[scale=0.45]{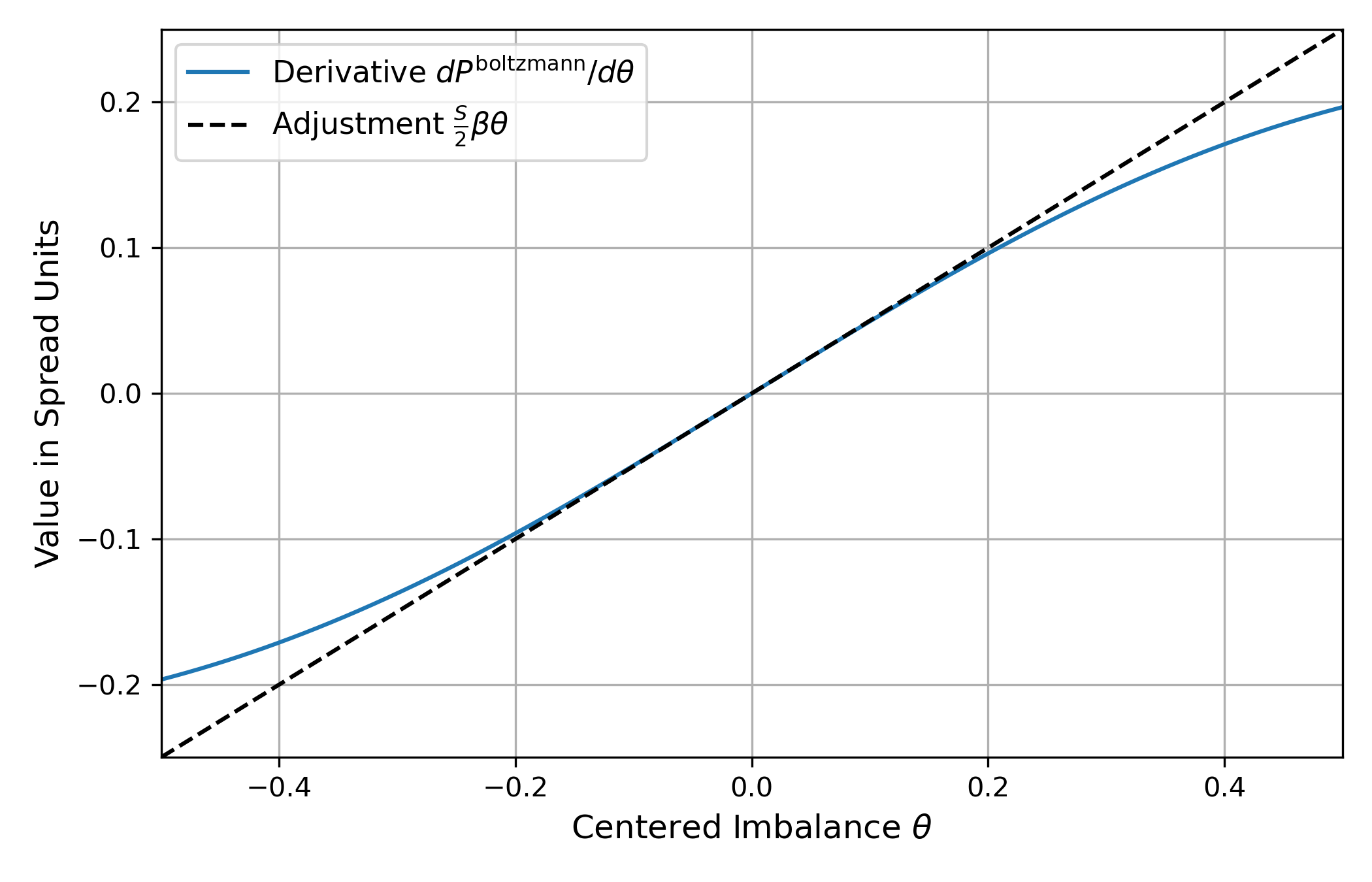} }} \label{fig:price_impact_comparison_a}
    \qquad
    \subfloat[\centering Price change $P^{\textit{boltzmann}}(\theta) - P^{\textit{boltzmann}}(0)$ as the theoretical impact when changing the centered imbalance from $0$ to $\theta$. The dashed lines represent the weighted mid-price and the mid-price changes respectively.]{{\includegraphics[scale=0.45]{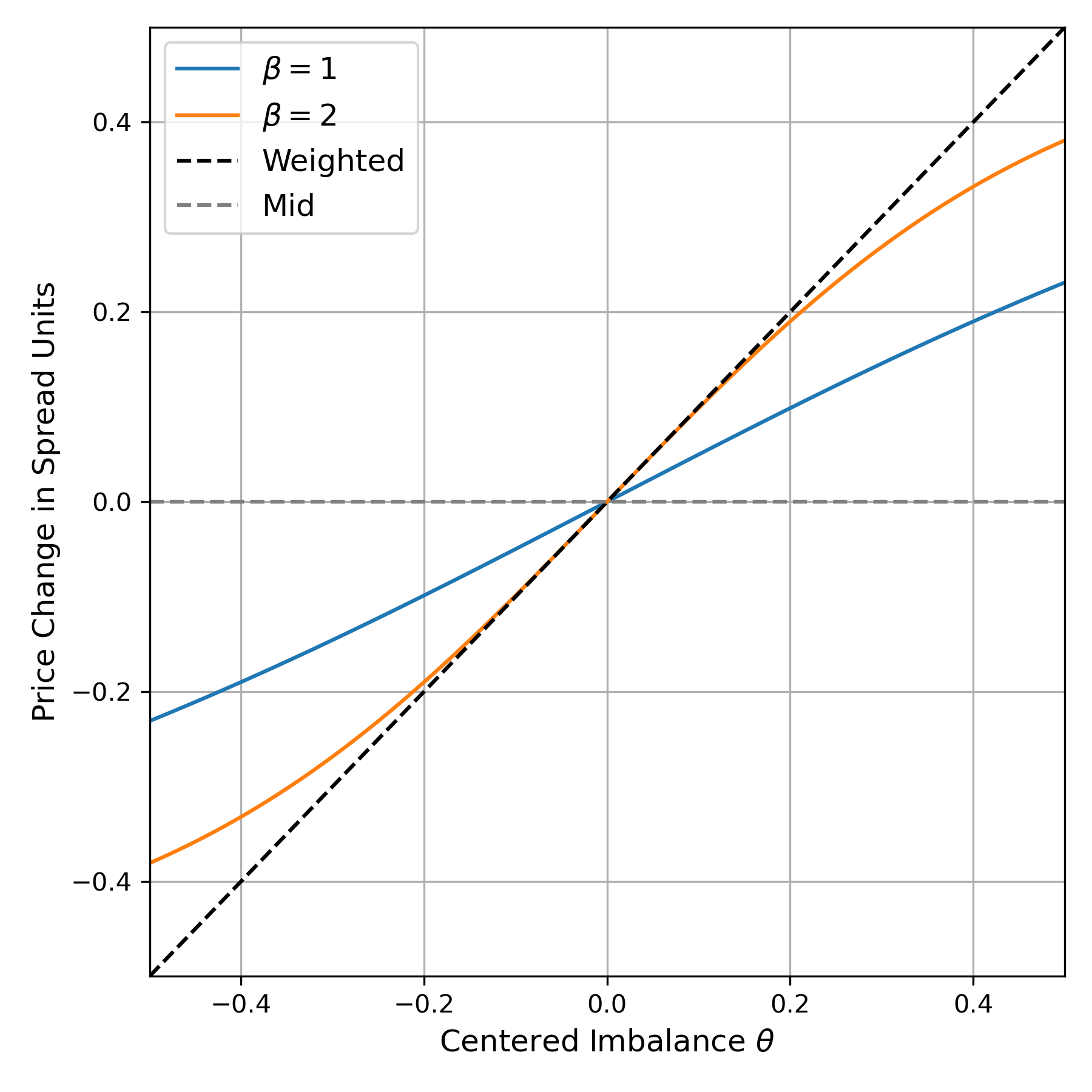} }} \label{fig:price_impact_comparison_b}
    \caption{The derivative of the Boltzmann price $\frac{d}{d\theta}P^{\textit{boltzmann}}$ vs. theoretical price change $\Delta P^{\textit{boltzmann}}$. The theoretical price move is the drift term, i.e., $\frac{S}{2} \tanh{\left( \beta \theta \right)}$. There has been observed, that the average price move can be up to a third of the spread \cite{Lipton2013}.
    }
    \label{fig:price_impact_comparison}
\end{figure}



Using the proposed Boltzmann price as the fundamental price and modeling market impact---e.g., as in Equation~\eqref{eq:boltzmann_market_impact_dynamics}---it becomes evident that not only trades influence fundamental price changes, but also any modifications to the bid and ask quotes.\footnote{At least at Level I. Higher levels of the LOB are not considered in this analysis.} This observation helps to partially explain the transient nature of market impact. Even when trades occur and affect the price, the order book continues to evolve, exhibiting resilience, and continuously seeking a new equilibrium. Although such dynamics may not be fully visible at the trade level, it becomes more apparent when observing the fundamental price.


\section{Conclusions and Future Research}

In this paper, we introduced the parametrized family of prices derived from the Maximum Entropy Principle. Both states bid and ask are assigned probabilities given by the Boltzmann distribution accounted for volume imbalances.

Using these probabilities, we derived the dynamics of a price process in which both the drift and volatility are affected by volume imbalance. We demonstrated that such dynamics can generate higher kurtosis and heavier tails compared to, for example, the Bachelier or Geometric Brownian motion models. In our examples, the modeling approach differed between stocks with variable spreads and those with a constant spread.

Building on the Boltzmann price concept, we considered a market impact model and showed that market impact can naturally emerge from the proposed price dynamics, generating a price drift.

Nevertheless, much remains to be explored and validated. A detailed empirical analysis using historical data---including trading prices across various time intervals---is needed to assess whether the proposed dynamics accurately capture real-world price behavior. A comprehensive study of model parameter estimation should be conducted, complemented by a theoretical analysis of the proposed equations. Additionally, future analyses should include not only price changes but also price returns and log returns.

It is valuable to focus on the GBM equivalent of our model, particularly for modeling longer time periods.

An especially promising idea is to analyze an estimator of the effective bid-ask spread using the Boltzmann price as a benchmark and to evaluate its bias across different values of the parameter $\beta$.

\bibliographystyle{unsrt}
\bibliography{references}  






\section{Appendix}


\subsection{Micro-Price} \label{sec:micro-price}

As an alternative to the mid-price and the weighted mid-price, Stoikov \cite{Stoikov_2018} proposed the micro-price as a proxy for the fundamental price. In brief, it can be described as the current mid-price adjusted for the expected future mid-price movements, i.e.,
\begin{equation}
    P^{\textit{micro}} = P^{\textit{mid}} + g\left( q, S \right).
\end{equation}
We can then define a sequence of mid-price predictions as follows:
\begin{equation}
    P_t^i = \mathbb{E}[ P_{\tau_i}^{\textit{mid}} \given \mathcal{F}_t ],
\end{equation}
where $\mathcal{F}_t$ represents the information contained in the LOB at time $t$, and $\tau_i$ are the stopping times corresponding to the mid-price changes. By construction, the processes $P_t^i$ are martingales up to time $\tau_i$, and if the sequence $\{ P_t^i \}_i$ converges, it is also a martinagle.

Now, assuming that the information $\mathcal{F}_t$ is given by the 3-dimensional Markov process $\left( P_t^{\textit{mid}}, q_t, S_t \right)$ and that the mid-price changes are independent of $P_t^{\textit{mid}}$, one can prove that
\begin{equation}
    P_t^i = P_t^{\textit{mid}} + \sum_{k=1}^i g^k\left( q_t, S_t\right),
\end{equation}
where
\begin{equation}
    g^1\left( q, S \right) = \mathbb{E}[ P_{\tau_1}^{\textit{mid}} - P_{t}^{\textit{mid}} \given q_t=q, S_t=S ],
\end{equation}
and
\begin{equation}
    g^{i+1}\left( q, S \right) = \mathbb{E}[ g^i(q_{\tau_1}, S_{\tau_1}) \given q_t=q, S_t=S ]
\end{equation}
can be computed recursively.

The estimation of the micro-price employs spread and imbalance discretization as well as data symmetrization to ensure convergence. For further details, we refer the reader to \cite{Stoikov_2018, HAGSTROMER2021}. In our analysis, we used the implementation provided in\footnote{\label{sstoikov_github}\href{https://github.com/sstoikov/microprice}{https://github.com/sstoikov/microprice}} and\footnote{\label{micro_price_medium_github}\href{https://medium.com/@mhfizt/high-frequency-estimator-of-future-prices-micro-price-paper-code-walkthrough-475adb98e91d}{https://medium.com/@mhfizt/high-frequency-estimator-of-future-prices-micro-price-paper-code-walkthrough-475adb98e91d}}.

\subsection{Large-Tick Stock Example} 

By slightly modifying the example from \cite{Stoikov_2018}, we can establish a correspondence between the micro-price and the Boltzmann price (for $\beta \in (0,2]$). Assume the spread $P_t^a-P_t^b = 1$ (tick). Rather than changing by 1 tick, we assume that the mid-price $M_t = P_t^{\textit{mid}}$ can change by half a tick.\footnote{Alternatively, one can assume, for example, that the spread is constant at $P_t^a - P_t^b = 2$ ticks and the mid-price moves by 1 tick. This yields the same formula.} Such behavior is common for large-cap stocks, where the mid-price frequently jumps by $\frac{1}{2}$ tick. This occurs, for example, when either the bid or the ask moves by one tick, causing the mid-price to change by half a tick.\footnote{However, this example might be somewhat unrealistic because the mid-price moves by half a tick despite a constant spread equal to one tick. Alternatively, we may imagine that both the bid and ask prices also move by half a tick while maintaining a constant spread of one tick. We should treat this example as an idealization, although it illustrates a relationship that can be observed in practice.}

We model the mid-price movement by the process $I_t$ starting from time $t=0$. The process begins at $I_0 = I$
and evolves as a Brownian motion until hitting either 0 or 1. At the stopping time, the process instantaneously perturbs to $\epsilon$ or $1 - \epsilon$ with equal probability, then restarts as a new Brownian motion from that point.

We define the sequence of independent perturbations as a sequence of i.i.d. random variables $(X_k)_{k \geq 1}$ such that $\mathbb{P}(X_k=\epsilon) = \alpha$ and $\mathbf{P}(X_k=1-\epsilon) = 1 - \alpha$, where $\epsilon \in (0,\frac{1}{2})$ and $\alpha \in [0,1]$.

Let $\tau_0=0$ and
\begin{equation}
    \tau_k = \inf \lbrace{ t \geq \tau_{k-1} \colon I_t = 0 \; \textit{or} \; I_t = 1 \rbrace}.
\end{equation}
The process $I_t$ can be represented as a sequence of independent Brownian motions $(B^k)$ defined on consecutive time intervals $[\tau_{k}, \tau_{k+1})$, where $I_{\tau_k}=X_k$ at the stopping times. Then, for $t \in [\tau_{k}, \tau_{k+1})$,
\begin{equation}
    I_t = I_{\tau_{k}} + B_{t-\tau_{k}}^{(k+1)}.
\end{equation}
Notice that $I_{\tau_0^{-}} = I$, $I_{\tau_k^{-}} \in \lbrace{ 0, 1 \rbrace}$, and $I_{\tau_k} = X_k$.
We set the following conditional probabilities of mid-price jumps:\footnote{One can generalize this example by assigning different probabilities to upward or downward price movements. However, since the micro-price implementation uses data symmetrization, this generalization is not of relevant for our analysis.}
\begin{equation*}
    \mathbb{P}( M_{\tau_k} - M_{\tau_{k}^{-}} = 0.5, \: I_{\tau_k}=\epsilon \given I_{\tau_k^{-}} = 1 )= \alpha \quad \text{and} \quad \mathbb{P}( M_{\tau_k} - M_{\tau_k^{-}} = 0, \: I_{\tau_k}=1-\epsilon \given I_{\tau_{k}^{-}} = 1 ) = 1 - \alpha,
\end{equation*}
and
\begin{equation*}
    \mathbb{P}( M_{\tau_k} - M_{\tau_{k}^{-}} = - 0.5, \: I_{\tau_k}=1-\epsilon \given I_{\tau_{k}^{-}} = 0 ) = \alpha \quad \text{and} \quad \mathbb{P}( M_{\tau_k} - M_{\tau_k}^{-} = 0, \: I_{\tau_k}=\epsilon \given I_{\tau_{k}^{-}} = 0 ) = 1 - \alpha.
\end{equation*}
Then, the first adjustment of the micro-price is given by:
\begin{equation*}
    g^{1}(I,S) = g^{1}(I) = \mathbb{E}\left[M_{\tau_1} - M_0 \given I \right] = \frac{\alpha}{2} \mathbb{P}\left( I_{\tau_{1} } = 1 \given I \right) - \frac{\alpha}{2} \mathbb{P}\left(I_{\tau_{1} } = 0 \given I \right) = 
\end{equation*}
\begin{equation*}
    \alpha \left( I - \frac{1}{2} \right) = \alpha \left( P^a - P^b \right) \cdot \left( q^b - \frac{1}{2} \right),
\end{equation*}
where $I = q^b$. Hence, the micro-price is equal to:
\begin{equation*}
    P_{\textit{micro}} = \frac{P^b + P^a}{2} + \alpha \left( P^a - P^b \right) \cdot \left( q^b - \frac{1}{2} \right) = \frac{P^b + P^a}{2} + \alpha \cdot \left[ P^a \left( q^b - \frac{1}{2} \right) + P^b \left( q^a - \frac{1}{2} \right) \right] =
\end{equation*}
\begin{equation*}
    \frac{P^a \left( 1 - \alpha + 2\alpha q^b \right) + P^b \left( 1 - \alpha +2\alpha q^a \right)}{2} = (1 - \alpha) \cdot \frac{P^b + P^a}{2} + \alpha \cdot \left( P^a q^b + P^b q^a \right) = \left( 1 - \alpha \right) P^{\textit{mid}} + \alpha P^{\textit{w}}.
\end{equation*}
As in \cite{Stoikov_2018}, the subsequent adjustments satisfy $g^i\left( I_0, S_0\right) = 0$ for $i \geq 2$, and thus the micro-price is given by:
\begin{equation*}
    P^{\textit{micro}} = \left( 1 - \frac{\beta}{2} \right) P^{\textit{mid}} + \frac{\beta}{2} P^{\textit{w}} \approx P^{\textit{boltzmann}}(\beta),
\end{equation*}
where $\beta = 2\alpha \in [0,2]$, and the last approximation follows from Lemma~\ref{prop:price_approx}. In particular, for $\beta = 1$ (i.e., when the bid-ask bounce probability is equal to $\frac{1}{2}$), the micro-price is equal to:
\begin{equation}
    P^{\textit{micro}} = \frac{1}{2} \left( P^{\textit{mid}} + P_w \right) \approx P^{\textit{boltzmann}}\left( 1 \right) = P^{\textit{eq}}.
\end{equation}
In Figure~\ref{fig:micro_vs_eq}, we compare the micro-price with the Boltzmann price for $\beta = 1$. The dataset consists of one month of bid-ask quotes, with timestamp rounded to the nearest second, for two stocks with distinct microstructure characteristics: Bank of America (BAC) and Chevron (CVX). The data comes from \cref{sstoikov_github}.
We use the implementation from \cref{sstoikov_github} and \cref{micro_price_medium_github}, with 6 iterations (adjustments) and 10 imbalance states. For the spread, we apply 4 states for CVX and 2 for BAC.
We find that the first micro-price adjustment is typically the most significant. As noted in \cite{HAGSTROMER2021}, good convergence is usually achieved after 2-3 iterations.
In the case of BAC---where the spread is typically $0.01$ and the mid-price changes frequently by half the spread---the alignment between the micro-price and the equilibrium price is more apparent. For CVX, where the spread varies more, the relationship is less straightforward. However, as illustrated in Figure~\ref{fig:micro_vs_eq_b}, the two prices can still align at times, particularly when compared to the weighted mid-price.

\begin{figure}
\centering
\begin{subfigure}{0.5\textwidth}
  \includegraphics[scale=0.4]{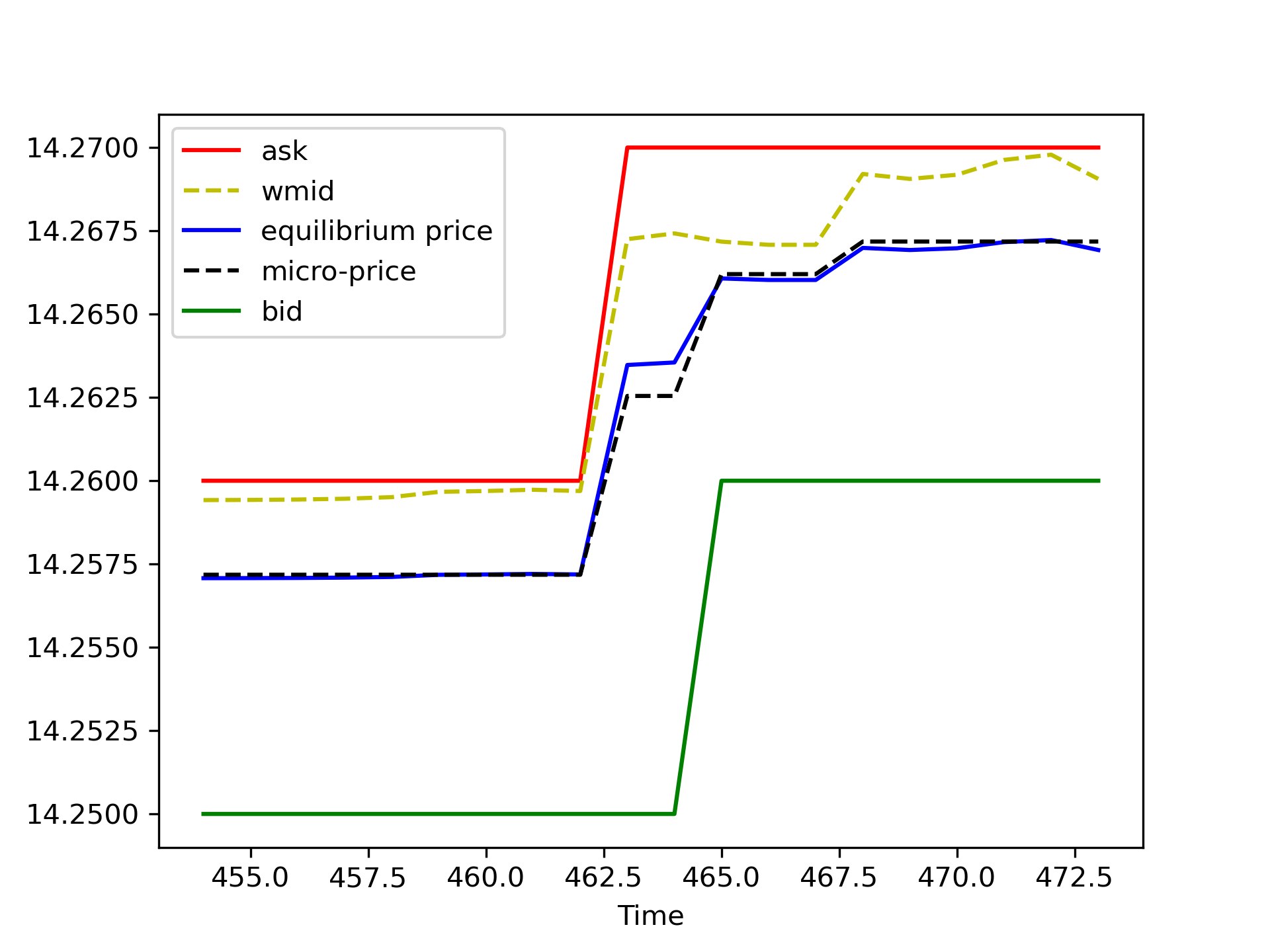}
  \caption{BAC}
  \label{fig:micro_vs_eq_a}
\end{subfigure}\hfill
\begin{subfigure}{0.5\textwidth}
  \includegraphics[scale=0.4]{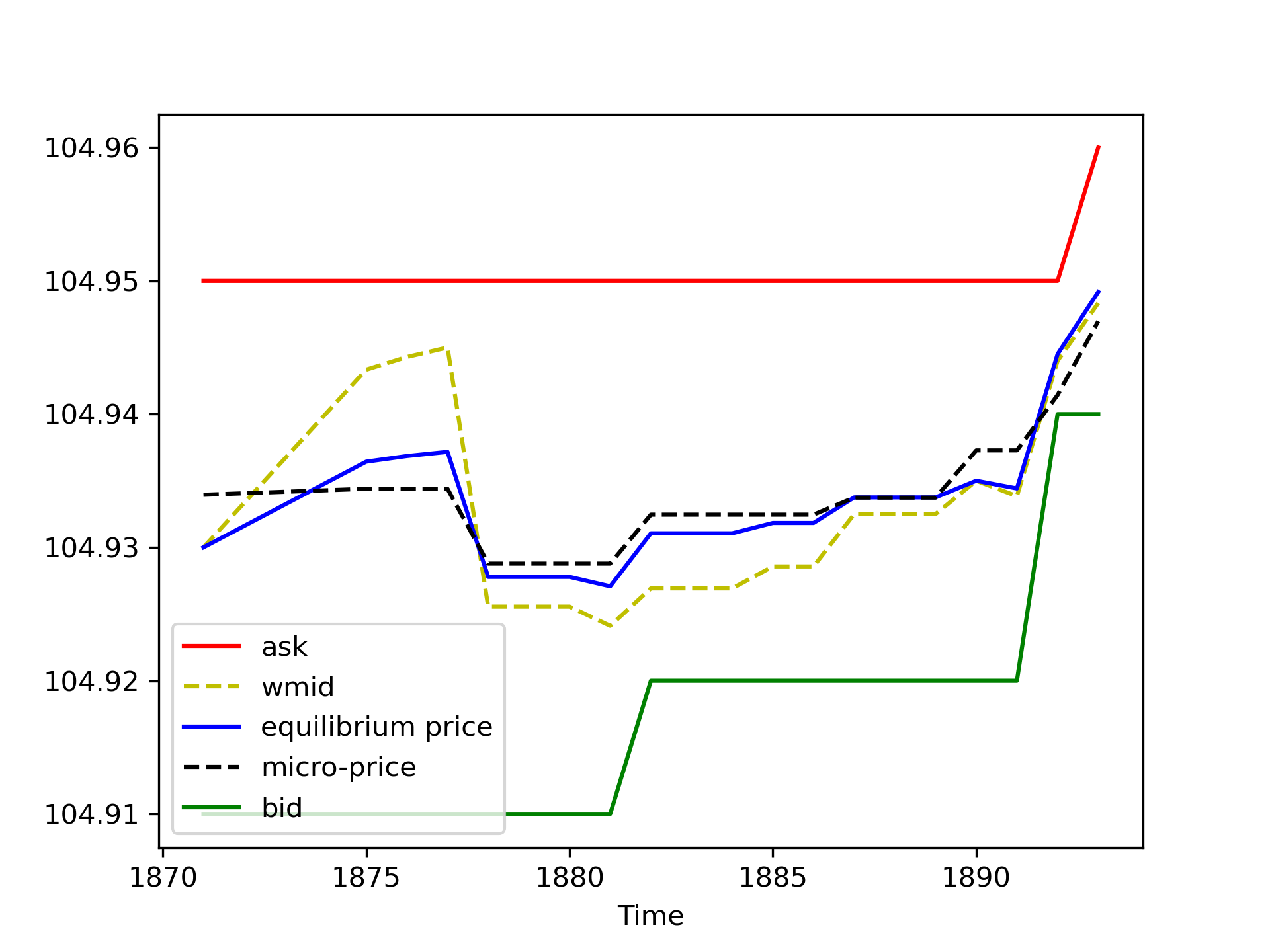}
  \caption{CVX}
  \label{fig:micro_vs_eq_b}
\end{subfigure}
\caption{Comparison of the micro-price and the equilibrium price dynamics over a short time interval.}
\label{fig:micro_vs_eq}
\end{figure}





\subsection{Relationship with the Spread}

In the previous example, the spread was 1 tick. Now, assume the spread is $P_t^a - P_t^b = S > 1$ tick, but the mid-price can still move by $\frac{1}{2}$ tick. Then, the micro-price is given by:
\begin{equation*}
    P_{\textit{micro}} = \frac{P^b + P^a}{2} + \frac{\alpha}{S} \cdot S \cdot \left( q^b - \frac{1}{2} \right) = \frac{P^b + P^a}{2} + \frac{\alpha}{S} \cdot \left[ P^a \left( q^b - \frac{1}{2} \right) + P^b \left( q^a - \frac{1}{2} \right) \right]
\end{equation*}
and using the same calculation, we obtain:
\begin{equation*}
    P_{\textit{micro}} = \left( 1 - \frac{\alpha}{S} \right) P^{\textit{mid}} + \frac{\alpha}{S} P^{\textit{w}} \approx P^{\textit{boltzmann}}\left(\frac{2\alpha}{S}\right) = P^{\textit{boltzmann}}\left(\frac{\beta}{S}\right).
\end{equation*}
However, when the spread exceeds one tick, the mid-price typically jumps by amounts larger than half a tick---often one tick or more. To model this behavior accurately, the conditional probabilities would need to be adjusted accordingly. Although it is indeed possible to generalize the model in this way, that is beyond the scope of this example. Instead, we focus on determining how the parameter $\beta$ is adjusted in relation to the spread.

Let us now calculate the ratio of the probabilities of two states, $P^b$ and $P^a$. We have:
\begin{equation*}
    \frac{p_b}{p_a} = \frac{\frac{1}{Z}e^{-\beta q^b}}{\frac{1}{Z} e^{-\beta q^a}} = e^{-\beta \left( q^b - q^a \right)},
\end{equation*}
where $Z= e^{-\beta q^b} + e^{-\beta q^a}$ is the so-called partition function. Hence,
\begin{equation}
    \beta = - \frac{1}{q^b - q^a} \ln \frac{p_b}{p_a}.
\end{equation}
We observe that when the spread is large, the TOB imbalance should have less impact on the price, and both states become approximately equally likely ($p_b \approx p_a$).\footnote{An analogy can be drawn with statistical mechanics: when the system is very hot, the probabilities of different states become similar.} Therefore, we obtain the following relation:
\begin{equation}
    \textit{spread} \to \infty \quad \implies \quad \beta \to 0 \quad \implies P^{\textit{boltzmann}} \to P^{\textit{mid}}.
\end{equation}
This means that for a large spread, the Boltzmann price should approximate the mid-price.

On the other hand, when the spread is small, the imbalance should have a significant impact on the price. In such cases, either $p_b \ll p_a$ or $p_a \ll p_b$, and in both cases $\beta \to \infty$. Thus, the relationship becomes inverted, and we have:
\begin{equation}
    \textit{spread} \to 0 \quad \implies \quad \beta \to \infty.
\end{equation}
As $\beta \to \infty$, the Boltzmann price $P^{\textit{boltzmann}}$ approaches $P^b$ when $q^b < q^a$, or $P^a$ when $q^b > q^a$. However, when the spread is very small, we have $P^b \approx P^a$, so the difference becomes negligible.

In summary, based on this heuristic reasoning, we expect the following relationship:
\begin{equation}
    \beta \sim \frac{1}{\textit{spread}}.
\end{equation}
Let $S = P_a - P_b$, and denote by $\tilde{p}_b$ and $\tilde{p}_a$ the probabilities of the bid and ask states, respectively, in the model with spread. Then
\begin{align}
    \tilde{p}_b &= \frac{e^{-\frac{\beta q^b}{S}}}{e^{-\frac{\beta q^b}{S}} + e^{-\frac{\beta q^a}{S}}}, \\
    \tilde{p}_a &= \frac{e^{-\frac{\beta q^a}{S}}}{e^{-\frac{\beta q^b}{S}} + e^{-\frac{\beta q^a}{S}}}.
\end{align}
It is therefore justified to introduce a price that accounts for the spread.

\begin{definition}
For $\beta \geq 0$, the generalized Boltzmann price is given by:
    \begin{equation}
        \tilde{P}\left( \beta \right) = \frac{e^{-\frac{\beta q^b}{S}} \cdot P^b + e^{-\frac{\beta q^a}{S}} \cdot P^a}{e^{-\frac{\beta q^b}{S}} + e^{-\frac{\beta q^a}{S}}} = \textit{softmax}\left( -\frac{\beta q^b}{S}, -\frac{\beta q^a}{S} \right) \cdot \left( P^b, P^a \right).
    \end{equation}
\end{definition}

One can observe that using the generalized Boltzmann price requires adjusting the parameter $\beta$ to fit the data. In particular, if we set $\beta = 1$, then for a large spread the generalized price will be close to the mid-price, according to Lemma~\ref{prop:price_approx}.

An analogy can be drawn between the kinetic theory of gases and our model. While the bid and ask volume imbalances $q^b$ and $q^a$ play the role of the energy states, the spread $S$ can be interpreted as the system's temperature.

\subsection{Spread and Imbalance from the Market Maker's Perspective}

Consider the profit and loss (P\&L) of market maker as described in \cite{Bouchaud_2009}.

Let $v_n \equiv v$ be the volume of the $n$-th trade\footnote{For simplicity, let us assume it is constant.} and
$\epsilon_n = 1$ indicates buy order, $\epsilon_n = -1$ the sell order. Then, the P\&L made by the liquidity provider in that given trade, mark-to-market at time $n+\ell$ is equal to \cite{Bouchaud_2009}:
\begin{equation}
    \mathcal{G}_L^{\textit{mid}}\left(n, n+\ell \right) = v \epsilon_n \left[ \left( m_n + \epsilon_n \frac{S}{2} - m_{n+\ell} \right) \right].
\end{equation}
On the other hand, the liquidity taker P\&L is $-\mathcal{G}_L^{\textit{mid}}\left(n, n+\ell \right)$.
Consider the liquidity taker's perspective. The average theoretical gain of the market maker is given by\footnote{We use the notation $\mathbb{E}[ \mathcal{G}_L^{\textit{mid}}]\left( \ell \right)$ to emphasize that the P\&L is compared to the mark-to-market mid price.}:
\begin{equation}
\mathbb{E}[ \mathcal{G}_L^{\textit{mid}}]\left( \ell \right) = v \left( \mathbb{E}\left[ \frac{S}{2} \right] - \mathcal{R}_{\ell} \right),
\end{equation}
where $\mathcal{R}_{\ell}$ is the lagged impact function, i.e.,
\begin{equation}
    \mathcal{R}_{\ell} = \mathbb{E}\left[ \epsilon_n \cdot \left( m_{n+\ell} - m_{n} \right) \right].
\end{equation}
How would the P\&L compare to the mark-to-market Boltzmann price? Similarly to above, we have:
\begin{equation}
    \mathcal{G}_L^{\textit{boltzmann}}\left(n, n + \ell \right) = v \epsilon_n \left[ \left( m_n + \epsilon_n \frac{S_n}{2} \right) - P_{n+\ell}^{\textit{boltzmann}} \right]
\end{equation}
Using Lemma \ref{prop:price_approx}, we have:
\begin{equation}
    \mathcal{G}_L^{\textit{boltzmann}}\left(n, n + \ell \right) \approx v \epsilon_n \left[ \left( m_n + \epsilon_n \frac{S_n}{2} \right) - m_{n+\ell} - \beta \frac{S_{n+\ell}}{2}\left( q_{n+\ell} - \frac{1}{2} \right) \right].
\end{equation}

The distance between the two P\&Ls can be approximated as follows:
\begin{equation}
    \left| \mathbb{E}[ \mathcal{G}_L^{\textit{mid}}]\left( \ell \right) - \mathbb{E}[ \mathcal{G}_L^{\textit{boltzmann}}]\left( \ell \right) \right| \lesssim \beta \mathbb{E}\left( S_{n+\ell} \left| \mathcal{I}_{n+\ell} \right| \right).
\end{equation}
This implies that the market maker should consider not only the market impact, $\mathcal{R}_{\ell}$, which contributes to potential losses, $-\epsilon_n \left( m_{n+\ell} - m_n \right)$, and gains from half the spread $\frac{S}{2}$, but also manage inventory risk. One way to achieve this is by widening or tightening the spread.



Consider now the market impact calculated for the Boltzmann price, i.e.,

\begin{definition}
    The lagged Boltzmann impact function is defined as:
    \begin{equation}
        \mathcal{R}_{\ell}^{\textit{boltzmann}} = \mathbb{E}\left[ \epsilon_n \cdot \left( P_{n+\ell}^{\textit{boltzmann}} - P_{n}^{\textit{boltzmann}} \right) \right].
    \end{equation}
\end{definition}

Let us calculate,
\begin{equation}
    \epsilon_n \left( P_{n+\ell}^{\textit{boltzmann}} - P_{n}^{\textit{boltzmann}} \right) \approx \epsilon_n \left( m_{n+\ell} - m_{n} \right) + \beta \epsilon_n \left( \frac{S_{n+\ell}}{2} \left( q_{n+\ell} - \frac{1}{2} \right) - \frac{S_n}{2} \left( q_n - \frac{1}{2} \right) \right).
\end{equation}
Then, one obtains the following:
\begin{equation}
    \mathcal{R}_{\ell}^{\textit{boltzmann}} \approx \mathcal{R}_{\ell} + \beta \mathbb{E}\left[ \epsilon_n \Delta_{\ell} \left( S\mathcal{I}\right) \right],
\end{equation}
where the $\ell$-lagged delta is
\begin{equation}
    \Delta_{\ell} \left( S\mathcal{I} \right) = \frac{S_{n+\ell}}{2} \left( q_{n+\ell} - \frac{1}{2} \right) - \frac{S_n}{2} \left( q_n - \frac{1}{2} \right) = S_{n+\ell} \mathcal{I}_{n+\ell} - S_n \mathcal{I}_n.
\end{equation}
How should we interpret the additional term $\Delta_{\ell} \left( S\mathcal{I} \right)$? As discussed, this approach highlights that market makers must account not only for the market impact, $\mathcal{R}_{\ell}$, but also for inventory risk, for instance, by widening or tightening the spread.

Market makers typically widen the spread when there is a significant order book imbalance, as this increases their exposure to adverse price movements. A wider spread helps mitigate this risk. Conversely, when the order book is more balanced, they can tighten the spread to attract more trading volume.


\newpage

\subsection{Additional Histograms for GE and LCID}

\begin{figure}[htbp]
    \centering
    \subfloat[\centering Imbalance]{{\includegraphics[scale=0.4]{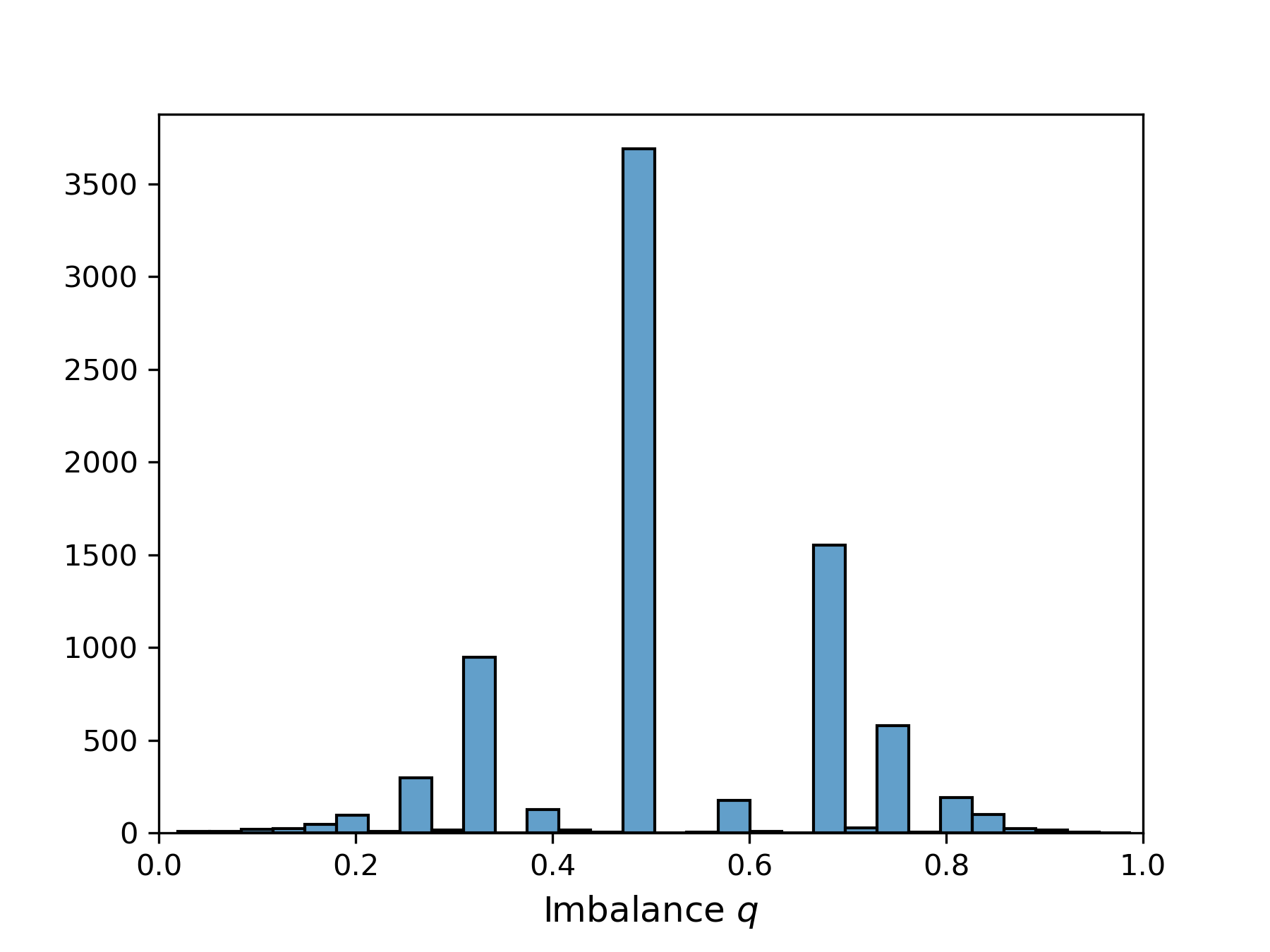} }}%
    \qquad
    \subfloat[\centering Sampled imbalance]{{\includegraphics[scale=0.4]{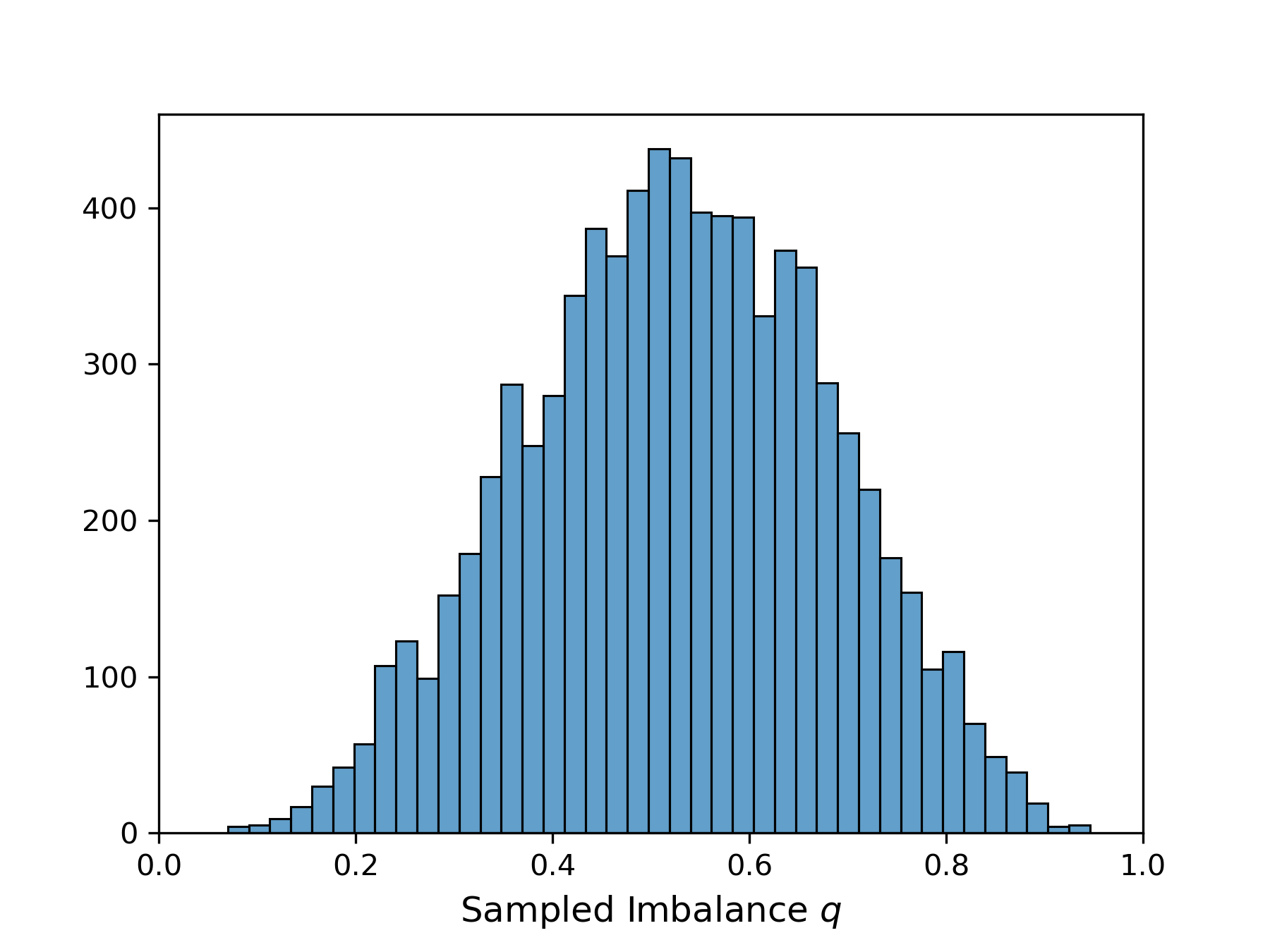} }}%
    \caption{Histograms of imbalance $q$ for GE, data is aggregated in $1$ min intervals.}%
    \label{fig:hist_imb_vs_sampled_imb_GE_N}
\end{figure}

\begin{figure}[htbp]
    \centering
    \subfloat[\centering Spread]{{\includegraphics[scale=0.4]{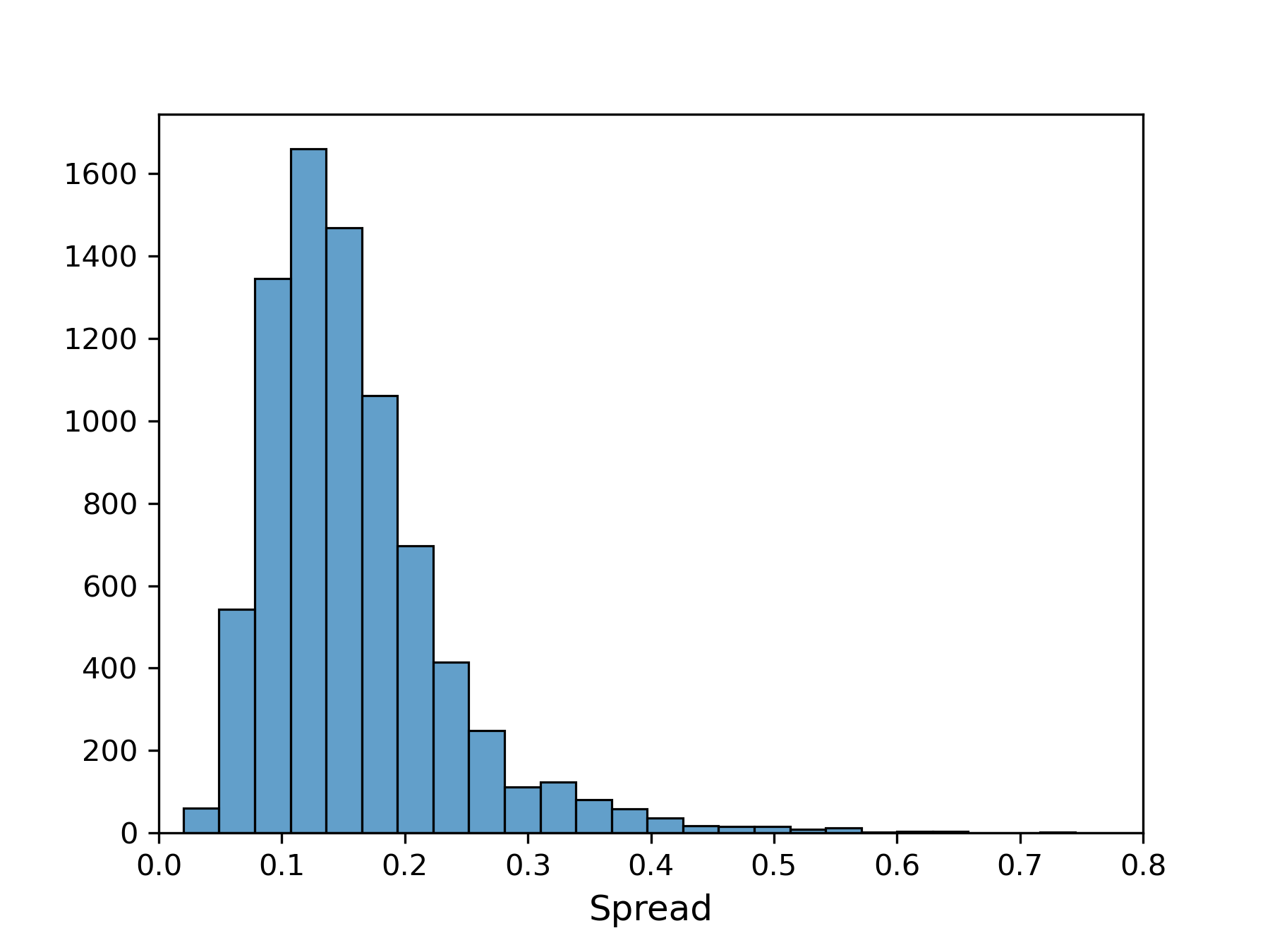} }}%
    \qquad
    \subfloat[\centering Sampled spread from $\Gamma(4.88,0.03)$.]{{\includegraphics[scale=0.4]{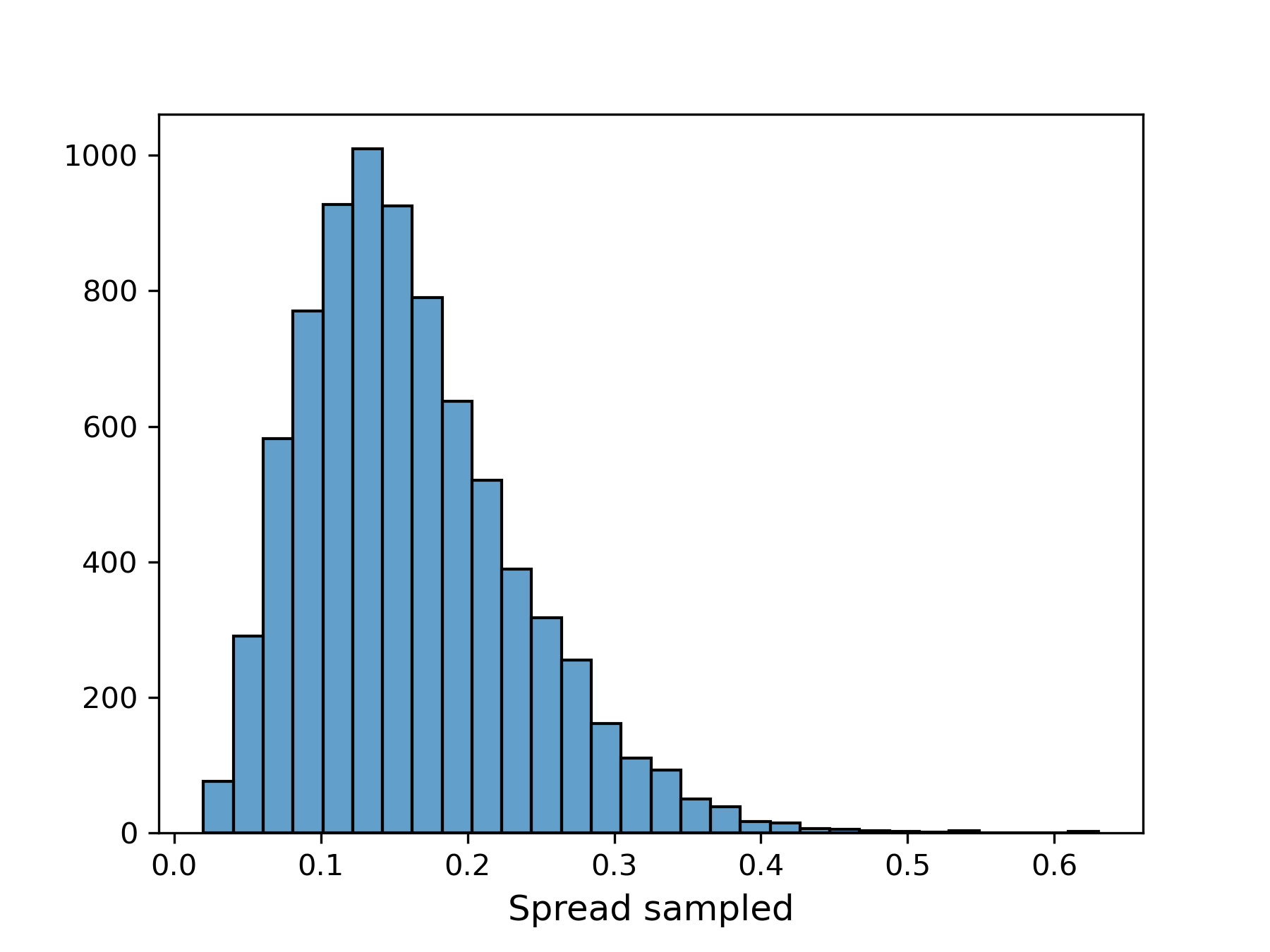} }}%
    \caption{Histograms of spread for GE, data is aggregated in $1$ min intervals.}%
    \label{fig:hist_spread_vs_sampled_spread_GE_N}
\end{figure}


\begin{figure}[htbp]
    \centering
    \subfloat[\centering Historical imbalance.]{{\includegraphics[scale=0.4]{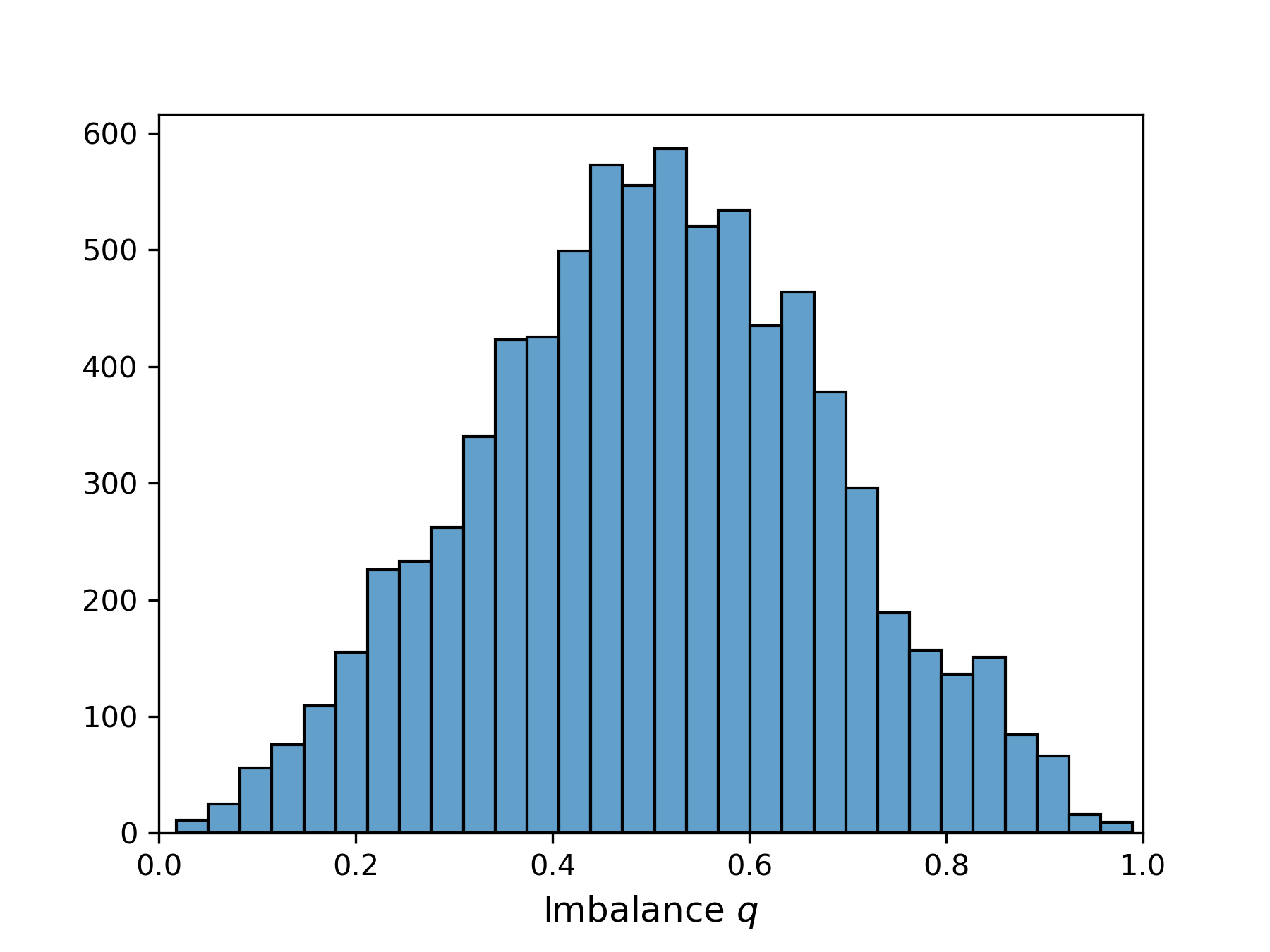} }}%
    \qquad
    \subfloat[\centering Sampled imbalance]{{\includegraphics[scale=0.4]{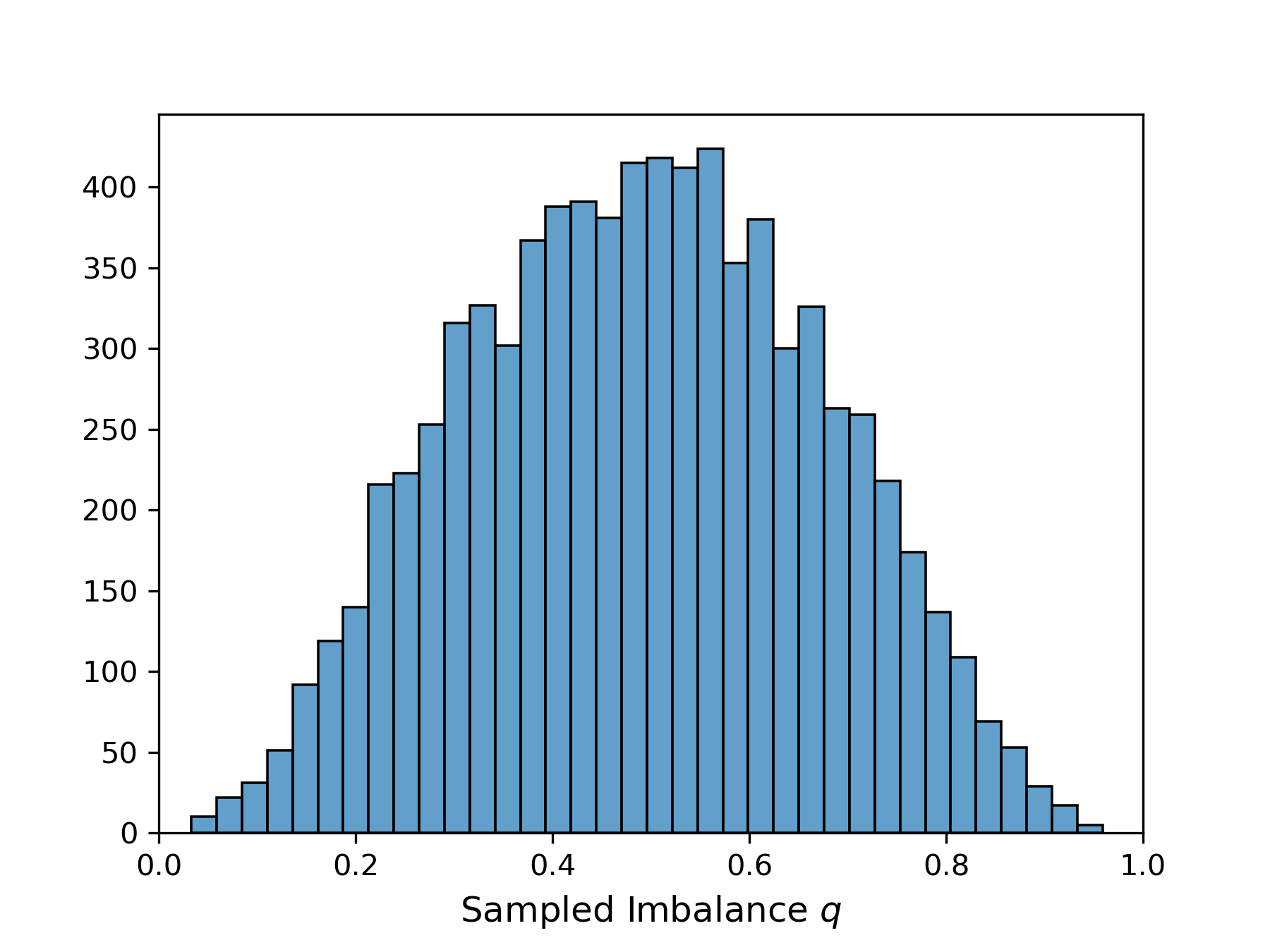} }}%
    \caption{Histograms of imbalance $q$ for LCID, data is aggregated in $1$ min intervals.}%
    \label{fig:hist_imb_vs_sampled_imb_LCID_O}
\end{figure}

\end{document}